\providecommand{\U}[1]{\protect\rule{.1in}{.1in}}
\newtheorem{theorem}{Theorem}
\newtheorem{acknowledgement}[theorem]{Acknowledgement}
\newtheorem{claim}[theorem]{Claim}
\newtheorem{corollary}[theorem]{Corollary}
\newtheorem{example}[theorem]{Example}
\newtheorem{lemma}[theorem]{Lemma}
\newtheorem{proposition}[theorem]{Proposition}
\newenvironment{proof}[1][Proof]{\noindent\textbf{#1.} }{\ \rule{0.5em}{0.5em}}
\begin{document}

\title{Polar Duality and the Reconstruction of Quantum Covariance Matrices from
Partial Data}
\author{Maurice de Gosson\\University of Vienna\\Institute of Mathematics \ (NuHAG)\\maurice.de.gosson@univie.ac.at}
\maketitle
\tableofcontents

\begin{abstract}
We address the problem of the reconstruction of quantum covariance matrices
using the notion of Lagrangian and symplectic polar duality introduced in
previous work. We apply our constructions to Gaussian quantum states which
leads to a non-trivial generalization of Pauli's reconstruction problem and we
state a simple tomographic characterization of such states.

\end{abstract}

\textbf{Keywords}: Covariance matrix, Gaussian state, polar duality;
Lagrangian plane; uncertainty principle

\textbf{MSC classification 2020}: 52A20, 52A05, 81S10, 42B35

\section{Introduction}

The covariance matrix is a fundamental concept in both classical and quantum
mechanics, serving distinct purposes in each domain. In classical mechanics,
the covariance matrix is employed to characterize statistical relationships
and correlations between different variables within a system \cite{Diaz}. In
quantum mechanics, the covariance matrix holds particular significance in the
context of quantum correlations \cite{Eli}. According to Born's rule, the
quantum covariance matrix encapsulates all available statistical information
about a quantum state. Moreover, covariance matrices serve as a powerful tool
for detecting entanglement, playing a key role in identifying and analyzing
quantum entangled states \cite{Gitt,wolf}.

Notably, the quantum covariance matrix fully characterizes Gaussian states,
their Wigner distributions are parametrized by their covariance matrices and
their center. This correspondence is central to the discussion in the present
paper to address the problem of the determination of covariance matrices from
partial data. This problem is not new and has been studied by many authors,
see for instance \v{R}eh\'{a}\v{c}ek \textit{et al}. \cite{Rehacek}. We have
initiated such a study from a wider point of view in \cite{math,sym,GSI23}
using the notion of polar duality from convex geometry; the present work
considerably extends and synthesizes these preliminary works. More precisely,
let $\Sigma$ be an arbitrary real positive definite $2n\times2n$ matrix; such
a matrix can always be viewed as the covariance of the multivariate Gaussian
distribution
\begin{equation}
\rho(z)=\frac{1}{(2\pi)^{n}\sqrt{\det\Sigma}}e^{-\frac{1}{2}\Sigma
^{-1}(z-z_{0})\cdot(z-z_{0})} \label{Gauss0}%
\end{equation}
(we are using the notation $z=(x,p)\in\mathbb{R}_{x}^{n}\times\mathbb{R}%
_{p}^{n}$); this distribution qualifies as a the Wigner distribution of a
quantum state provided that we impose a constraint on $\Sigma$; the latter is
usually chosen to be \cite{dutta,Birk,golu09,Narco}:
\begin{equation}
\text{\textit{The eigenvalues of} }\Sigma+\frac{i\hbar}{2}J\text{ \textit{are
all}}\geq0 \label{Quantum0}%
\end{equation}
where $J$ is the standard symplectic matrix. We have shown
\cite{FOOP,blob,golu09} that this condition is equivalent to%
\begin{equation}
\text{\textit{The covariance ellipsoid} }\Omega_{\Sigma}\text{
\textit{contains a quantum blob}} \label{Quantum1}%
\end{equation}
(a quantum blob is a symplectic ball with radius $\sqrt{\hbar}$); in
\cite{FOOP,golu09} we also formulated these conditions using the topological
notion of symplectic capacity \cite{jphysa} which is closed related to
Gromov's symplectic non-squeezing theorem. On the operator level the
conditions (\ref{Quantum0}) and (\ref{Quantum1}) guarantees that the trace
class operator with Weyl symbol $(2\pi\hbar)^{n}\rho$ is positive semidefinite
and has trace one, and thus qualifies as a density operator \cite{QHA}.

The main results of this paper are

\begin{itemize}
\item Theorem \ref{ThmRadon} which states that a generalized Gaussian (and
hence a covariance matrix) can be reconstructed from the knowledge of only two
marginal distributions of along a pair of transverse Lagrangian planes; it
provides a generalization of the solution of Pauli's reconstruction problem;

\item Theorem \ref{ThmA} provides a geometric interpretation of the previous
results; identifies Gaussian states with quantum blobs viewed as John
ellipsoids of a convex set constructed using the notion of Lagrangian polar
duality, which can be viewed as a geometric variant of the uncertainty principle:

\item Theorem \ref{ThmB} uses the notion of symplectic polar duality
$\Omega\longrightarrow\Omega^{\hbar,\omega}$ we introduces in
\cite{FOOPMC,Bull}. We prove that in order to prove that a phase space
ellipsoid $\Omega$ is the covariance ellipsoid of a Gaussian quantum state it
suffices that $\Omega^{\hbar,\omega}\cap\ell\subset\Omega\cap\ell$ for one
Lagrangian subspace $\ell$ (and hence for all).
\end{itemize}

\paragraph{Notation and terminology}

We will denote by $\omega$ the standard symplectic form on $\mathbb{R}%
^{2n}\equiv\mathbb{R}_{x}^{n}\times\mathbb{R}_{p}^{n}$ that is, in matrix
form\textit{,} $\omega(z,z^{\prime})=Jz\cdot z^{\prime}$ where $z=(x,p)^{T}$,
$z^{\prime}=(x^{\prime},p^{\prime})^{T}$ and $J=%
\begin{pmatrix}
0 & I\\
-I & 0
\end{pmatrix}
$. The scalar product of two vectors $u,v\in\mathbb{R}^{m}$ is written $u\cdot
v$. We denote by $\operatorname*{Sym}_{++}(m,\mathbb{R})$ the convex set of
all symmetric positive definite real $m\times m$ matrices.

The group of all automorphisms of the symplectic space $(\mathbb{R}%
^{2n},\omega)$ leaving the standard symplectic form invariant is called the
(standard) symplectic group and is denoted by $\operatorname*{Sp}(n)$. The
properties of $\operatorname*{Sp}(n)$ and of its double covering, the
metaplectic group $\operatorname*{Mp}(n)$ are summarized in the Appendix A.

A linear subspace of $(\mathbb{R}^{2n},\omega)$ with dimension $n$ is on which
$\omega$ vanishes identically is called a Lagrangian subspace (or plane); the
set of all Lagrangian subspaces of $(\mathbb{R}^{2n},\omega)$ will be denoted
by $\operatorname*{Lag}(n)$, and is called the Lagrangian Grassmannian of
$(\mathbb{R}^{2n},\omega)$.

Let $B^{2n}(\sqrt{\hbar})$ be the ball with center $0$ and radius $\sqrt
{\hbar}$ in $\mathbb{R}^{2n}$ (equipped with the usual Euclidean norm). The
image $S(B^{2n}(\sqrt{\hbar}))$ of that ball by $S\in\operatorname*{Sp}(n)$ is
called a \textit{quantum blob. }

\section{The RSUP for Gaussians}

The Robertson--Schr\"{o}dinger uncertainty principle (RSUP) is, as opposed to
the elementary\ Heisenberg inequalities, a fundamental concept in quantum
mechanics that describe the trade-off between the uncertainties in the
measurements of two non-commuting observables, such as position and momentum.
These inequalities are typically expressed in terms of standard deviations or (co)variances.

\subsection{Multivariate Correlated Gaussians\label{secGauss}}

We denote by $\phi_{0}$ the standard Gaussian (called \textquotedblleft
fiducial state\textquotedblright\ by Littlejohn \cite{Littlejohn}) and by
$W\phi_{0}$ its Wigner function:%
\begin{equation}
\phi_{0}(x)=(\pi\hbar)^{-n/4}e^{-|x|^{2}/2\hbar}\text{ \ },\text{ \ }W\phi
_{0}(x)=(\pi\hbar)^{-n}e^{-|z|^{2}/\hbar}; \label{wigfi}%
\end{equation}
here $|x|^{2}=x_{1}^{2}\div\cdot\cdot\cdot\div+x_{n}^{2}$ and $|z|^{2}%
=|x|^{2}+|p|^{2}$. Consider now the set $\operatorname*{Gauss}_{0}(n)$ of all
centered generalized Gaussians: we have $\psi_{XY}\in\operatorname*{Gauss}%
_{0}(n)$ if and only if
\begin{equation}
\psi_{X,Y}^{\gamma}(x)=i^{\gamma}\left(  \tfrac{1}{\pi\hbar}\right)
^{n/4}(\det X)^{1/4}e^{-\tfrac{1}{2\hbar}(X+iY)x\cdot x} \label{psixy}%
\end{equation}
where $X$ and $Y$ are symmetric real $n\times n$ matrices such that $X>0$ and
$\gamma\in\mathbb{R}$ is a real constant. This function is normalized to
unity: $||\psi_{X,Y}^{\gamma}||_{L^{2}}=1$ and its Wigner transform is given
by \cite{Birk,WIGNER}
\begin{equation}
W\psi_{X,YX,Y}^{\gamma}(z)=\left(  \tfrac{1}{\pi\hbar}\right)  ^{n}%
e^{-\tfrac{1}{\hbar}Gz^{2}} \label{wxy}%
\end{equation}
where $G$ is the symmetric and symplectic matrix
\begin{equation}
G=%
\begin{pmatrix}
X+YX^{-1}Y & YX^{-1}\\
X^{-1}Y & X^{-1}%
\end{pmatrix}
~. \label{gsym}%
\end{equation}
(The result seems to back to Bastiaans, see \cite{Littlejohn}.) Observe that
$G=S^{T}S$ where
\begin{equation}
S=%
\begin{pmatrix}
X^{1/2} & 0\\
X^{-1/2}Y & X^{-1/2}%
\end{pmatrix}
\in\operatorname*{Sp}(n). \label{bi}%
\end{equation}
An almost trivial, but fundamental, observation is that every $\psi
_{X,Y}^{\gamma}\in\operatorname*{Gauss}_{0}(n)$ can be obtained from the
standard Gaussian using elementary metaplectic transformations, namely%
\begin{equation}
\psi_{X,Y}^{\gamma}=i^{\gamma}\widehat{V}_{Y}\widehat{M}_{X^{1/2},0}\phi_{0}.
\label{psifi}%
\end{equation}
In fact:

\begin{proposition}
\label{PropGauss}The metaplectic group $\operatorname*{Mp}(n)$ acts
transitively on Gaussians $\psi_{X,Y}=\psi_{X,Y}^{0}$ up to a unimodular
factor:
\[
\operatorname*{Mp}(n)\times\operatorname*{Gauss}\nolimits_{0}%
(n)\longrightarrow\operatorname*{Gauss}\nolimits_{0}(n).
\]

\end{proposition}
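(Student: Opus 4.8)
The plan is to reduce the whole statement to the single identity $\operatorname{Mp}(n)\cdot\phi_{0}=\operatorname{Gauss}_{0}(n)$ (equality up to a unimodular factor). This suffices because $\operatorname{Mp}(n)$ is a group: once the orbit of the fiducial state $\phi_{0}=\psi_{I,0}^{0}$ is shown to be exactly $\operatorname{Gauss}_{0}(n)$, the action is automatically well defined (for $\psi=\widehat{T}\phi_{0}$ one has $\widehat{S}\psi=\widehat{ST}\phi_{0}$, which lies in the orbit) and automatically transitive (write any two Gaussians as metaplectic images of $\phi_{0}$ and compose). So the task splits into the two inclusions.

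The inclusion $\operatorname{Gauss}_{0}(n)\subseteq\operatorname{Mp}(n)\cdot\phi_{0}$ is precisely the factorization (\ref{psifi}): every $\psi_{X,Y}^{\gamma}=i^{\gamma}\widehat{V}_{Y}\widehat{M}_{X^{1/2},0}\phi_{0}$ is a metaplectic image of $\phi_{0}$ up to the phase $i^{\gamma}$. From this, transitivity is immediate: given $\psi_{X,Y}$ and $\psi_{X',Y'}$, I invert the first factorization to write $\phi_{0}=\widehat{M}_{X^{1/2},0}^{-1}\widehat{V}_{Y}^{-1}\psi_{X,Y}$ and substitute into the second, obtaining $\psi_{X',Y'}=\widehat{S}\,\psi_{X,Y}$ (up to a unimodular factor) with $\widehat{S}=\widehat{V}_{Y'}\widehat{M}_{(X')^{1/2},0}\widehat{M}_{X^{1/2},0}^{-1}\widehat{V}_{Y}^{-1}\in\operatorname{Mp}(n)$, the membership being clear since $\operatorname{Mp}(n)$ is closed under products and inverses.

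The reverse inclusion $\operatorname{Mp}(n)\cdot\phi_{0}\subseteq\operatorname{Gauss}_{0}(n)$ — that a metaplectic operator never leaves the Gaussian class — is where the real work lies, and I would handle it through the Wigner transform, where the metaplectic action becomes the symplectic covariance $W(\widehat{S}\psi)(z)=W\psi(S^{-1}z)$. Applied to $W\phi_{0}(z)=(\pi\hbar)^{-n}e^{-|z|^{2}/\hbar}$, this gives $W(\widehat{S}\phi_{0})(z)=(\pi\hbar)^{-n}e^{-\tfrac{1}{\hbar}G'z\cdot z}$ with $G'=S^{-T}S^{-1}$; this $G'$ is symmetric positive definite and, being a product of two elements of $\operatorname{Sp}(n)$, is again symplectic. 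The main obstacle is then the purely algebraic fact that every symmetric positive-definite symplectic matrix has the block shape (\ref{gsym}): writing $G'=\left(\begin{smallmatrix}A&B\\B^{T}&D\end{smallmatrix}\right)$, I would set $X'=D^{-1}$ (positive definite, since $D$ is a diagonal block of a positive-definite matrix) and $Y'=BD^{-1}$, and then verify that the symplectic relations $DB=B^{T}D$ and $AD=I+B^{2}$ force $Y'$ to be symmetric and the upper-left block to equal $X'+Y'X'^{-1}Y'$. This identifies $W(\widehat{S}\phi_{0})$ with $W\psi_{X',Y'}$, hence $\widehat{S}\phi_{0}=\psi_{X',Y'}$ up to a unimodular factor, completing the inclusion. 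This surjectivity of the parametrization (\ref{gsym}) is really the pre-Iwasawa factorization of a positive-definite symplectic matrix as $S'^{T}S'$ with $S'$ of the triangular form (\ref{bi}); it is the one step that needs an actual computation, everything else being formal.
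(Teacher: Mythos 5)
Your proof is correct, and it handles the crucial step by a genuinely different route than the paper. Both arguments share the same skeleton: reduce everything to the orbit $\operatorname{Mp}(n)\cdot\phi_{0}$, use formula (\ref{psifi}) for the inclusion $\operatorname{Gauss}_{0}(n)\subseteq\operatorname{Mp}(n)\cdot\phi_{0}$, and obtain transitivity formally from the group structure. The divergence is in the hard inclusion $\operatorname{Mp}(n)\cdot\phi_{0}\subseteq\operatorname{Gauss}_{0}(n)$. The paper factors the \emph{operator}: by the pre-Iwasawa factorization (\ref{iwa1}) one writes $\widehat{S}=\pm\widehat{V}_{P}\widehat{M}_{L,0}\widehat{R}$, shows that $\widehat{R}\phi_{0}=i^{\mu}\phi_{0}$ (Wigner covariance plus rotational invariance of $W\phi_{0}$), and reads off $\widehat{S}\phi_{0}=\pm i^{\mu}\psi_{L^{2},P}$ explicitly. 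You never factor $\widehat{S}$: you compute $W(\widehat{S}\phi_{0})$ directly, getting the Gaussian with matrix $G'=S^{-T}S^{-1}$, and then prove the purely algebraic lemma that every symmetric positive-definite symplectic matrix has the block form (\ref{gsym}); your verification is complete and correct, since with $X'=D^{-1}>0$, $Y'=BD^{-1}$ the relation $DB=B^{T}D$ gives the symmetry of $Y'$ and the off-diagonal blocks, while $AD=I+B^{2}$ gives $A=X'+Y'X'^{-1}Y'$. Note that both routes rest on the same two Wigner facts (symplectic covariance and injectivity modulo a unimodular constant): the paper invokes injectivity at the rotation step, you invoke it at the very end. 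What the paper's route buys is structural: it exhibits the stabilizer of $\phi_{0}$ as the metaplectic rotations, i.e. the homogeneous-space picture $\operatorname{Gauss}_{0}(n)\cong\operatorname{Sp}(n)/U(n)$, with the parameters $(L^{2},P)$ coming straight from the pre-Iwasawa data. What your route buys is economy: the operator-level factorization theorem is replaced by elementary block algebra, and — as you correctly observe — the lemma you prove is exactly the converse of the paper's remark following (\ref{gsym}), namely the surjectivity of the parametrization $(X,Y)\mapsto G=S^{T}S$ with $S$ of the triangular form (\ref{bi}), so the metaplectic machinery enters only through the covariance formula.
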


\begin{proof}
To see this it is sufficient, in view of formula (\ref{psifi}) to show that if
$\widehat{S}\in\operatorname*{Mp}(n)$ then $\widehat{S}\phi_{0}=i^{\mu}%
\psi_{X,Y}$ for some $\mu\in\mathbb{R}$ and\ $\psi_{X,Y}\in
\operatorname*{Gauss}\nolimits_{0}(n)$. In view of the pre-Iwasawa
factorization (Appendix A, formula (\ref{iwa1})) we have $S=\pi
^{\operatorname*{Mp}}(\widehat{S})=V_{P}M_{L}R$ and hence%
\begin{equation}
\widehat{S}=\pm\widehat{V}_{P}\widehat{M}_{L,0}\widehat{R}. \label{iwa2}%
\end{equation}
We claim that It follows that $\widehat{R}\phi_{0}=i^{\mu}\phi_{0}$ for some
$\mu\in\mathbb{R}$. In fact we have, by the symplectic covariance of the
Wigner transform,
\[
W(\widehat{R}\phi_{0})(z)=W\phi_{0}(R^{-1}z)=W\phi_{0}(z)
\]
the second equality in view of the rotational invariance of $\phi_{0}$; it
follows that
\[
\widehat{S}\phi_{0}=\pm i^{\mu}\widehat{V}_{P}\widehat{M}_{L,0}\widehat{R}%
\phi_{0}=\pm i^{\mu}\psi_{L^{2},P}%
\]
hence $\widehat{S}\phi_{0}\in\operatorname*{Gauss}\nolimits_{0}(n)$ as claimed
(this result can also be obtained using Fourier integral operators, at the
cost of much more complicated calculations involving Cayley transforms (see
\textit{e.g}. \cite{Littlejohn}). There remains to show that $\widehat{S}%
\psi_{X,Y}^{\gamma}\in\operatorname*{Gauss}\nolimits_{0}(n)$ for all
$\psi_{X,Y}^{\gamma}\in\operatorname*{Gauss}\nolimits_{0}(n)$. Every
$\widehat{S}\in\operatorname*{Mp}(n)$ can be written $\widehat{S}%
=\widehat{V}_{P}\widehat{M}_{L,m}\widehat{R}$ ($m\in\{0,2\}$) corresponding to
the pre-Iwasaw factorization of $S=\pi^{\operatorname*{Mp}}(\widehat{S})$. For
$\widehat{S^{\prime}}\in\operatorname*{Mp}(n)$; we have
\[
\widehat{S^{\prime}}\psi_{X,Y}=\widehat{S^{\prime}}\psi\widehat{V}%
_{Y}\widehat{M}_{X^{1/2},0}\phi_{0}=\widehat{S}\phi_{0}=\widehat{V}%
_{P}\widehat{M}_{L,m}\widehat{R}\phi_{0}.
\]
We claim that $\widehat{R}\phi_{0}=i^{\mu}\phi_{0}$ for some $\mu\in
\mathbb{R}$. We have
\[
W(\widehat{R}\phi_{0})(z)=W\phi_{0}(R^{-1}z)=W\phi_{0}(z)
\]
the second equality in view of the rotational invariance of $\phi_{0}$; hence
\[
\widehat{S^{\prime}}\psi_{X,Y}=\widehat{S}\phi_{0}=i^{\mu}\widehat{V}%
_{P}\widehat{M}_{L,m}\widehat{R}\phi_{0}=i^{\mu}\psi_{L^{2},P}\in
\operatorname*{Gauss}\nolimits_{0}(n).
\]

\end{proof}

\subsection{The multivariate RSUP}

A mixed Gaussian quantum state centered at $z_{0}$ is a density operator
$\widehat{\rho}$ on $L^{2}(\mathbb{R}^{n})$ whose Wigner distribution $\rho$
is of the type%
\begin{equation}
\rho(z)=\frac{1}{(2\pi)^{n}\sqrt{\det\Sigma}}e^{-\frac{1}{2}\Sigma
^{-1}(z-z_{0})\cdot(z-z_{0})} \label{rhoG}%
\end{equation}
where $\Sigma$ is the covariance (or noise) matrix; the condition
$\widehat{\rho}\geq0$ is equivalent to saying that the eigenvalues of the
Hermitian matrix $\Sigma+\frac{i\hbar}{2}J$ are $\geq0$ \cite{dutta,Birk},
which we write for short as the L\"{o}wner ordering:
\begin{equation}
\Sigma+\frac{i\hbar}{2}J\geq0. \label{quant0}%
\end{equation}
The state $\widehat{\rho}$ is pure if and only if $\det\Sigma=\left(
\frac{\hbar}{2}\right)  ^{2n}$ (see below). The condition (\ref{quant0}) is
equivalent to the following three statements:

\begin{claim}
The symplectic eigenvalues $\lambda_{1}^{\omega},...,\lambda_{n}^{\omega}$
\ of $\Sigma$ are all $\geq\frac{1}{2}\hbar$;
\end{claim}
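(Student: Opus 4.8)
The plan is to diagonalize $\Sigma$ by a symplectic congruence and thereby decouple the L\"owner condition (\ref{quant0}) into $n$ scalar inequalities, one per symplectic eigenvalue. First I would invoke Williamson's symplectic diagonalization theorem: since $\Sigma\in\operatorname*{Sym}_{++}(2n,\mathbb{R})$ there exists $S\in\operatorname*{Sp}(n)$ with
\[
\Sigma=S^{T}DS,\qquad D=\begin{pmatrix}\Lambda & 0\\ 0 & \Lambda\end{pmatrix},\quad \Lambda=\operatorname{diag}(\lambda_{1}^{\omega},\dots,\lambda_{n}^{\omega}),
\]
the positive numbers $\lambda_{j}^{\omega}$ being, by definition, the symplectic eigenvalues of $\Sigma$.

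The crux of the argument is the defining identity $S^{T}JS=J$ of the symplectic group, which allows me to carry the $\tfrac{i\hbar}{2}J$ term through the \emph{same} congruence:
\[
\Sigma+\tfrac{i\hbar}{2}J=S^{T}DS+\tfrac{i\hbar}{2}S^{T}JS=S^{T}\Big(D+\tfrac{i\hbar}{2}J\Big)S .
\]
Because $S$ is real and invertible, the map $M\mapsto S^{T}MS=S^{\ast}MS$ is a $\ast$-congruence of Hermitian matrices and hence preserves positive semidefiniteness. Consequently the constraint (\ref{quant0}) holds if and only if $D+\tfrac{i\hbar}{2}J\geq0$.

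It then remains to read off positivity from the normal form. Since $\Lambda$ is diagonal, the permutation of coordinates that pairs each $x_{j}$ with the corresponding $p_{j}$ turns $D+\tfrac{i\hbar}{2}J$ into a block-diagonal Hermitian matrix with $2\times2$ blocks
\[
B_{j}=\begin{pmatrix}\lambda_{j}^{\omega} & \tfrac{i\hbar}{2}\\[2pt] -\tfrac{i\hbar}{2} & \lambda_{j}^{\omega}\end{pmatrix}.
\]
Each $B_{j}$ has eigenvalues $\lambda_{j}^{\omega}\pm\tfrac{\hbar}{2}$, so $B_{j}\geq0$ precisely when $\lambda_{j}^{\omega}\geq\tfrac{\hbar}{2}$; assembling the blocks gives $D+\tfrac{i\hbar}{2}J\geq0$ if and only if $\lambda_{j}^{\omega}\geq\tfrac{\hbar}{2}$ for every $j$, which is the stated equivalence.

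The only genuinely delicate point is the first step, namely the compatibility of Williamson's normal form with the symplectic invariance $S^{T}JS=J$: it is exactly this that lets a single matrix $S$ diagonalize the covariance $\Sigma$ while leaving the antisymmetric part $J$ rigid, so that the quantum positivity condition splits cleanly into decoupled scalar inequalities. Once this is in place, the remaining $2\times2$ eigenvalue computation is routine.
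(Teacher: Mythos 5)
Your proof is correct and follows exactly the route the paper indicates: it invokes Williamson's symplectic diagonalization $\Sigma=S^{T}DS$ and exploits $S^{T}JS=J$ to reduce the L\"owner condition $\Sigma+\tfrac{i\hbar}{2}J\geq0$ to the $2\times2$ Hermitian blocks with eigenvalues $\lambda_{j}^{\omega}\pm\tfrac{\hbar}{2}$. The paper merely sketches this argument (citing Williamson's theorem and the literature), so your write-up is the same approach with the congruence and block-decoupling details made explicit.
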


\begin{claim}
The covariance ellipsoid $\Omega_{\Sigma}=\{z:\frac{1}{2}\Sigma^{-1}z\cdot
z<1\}$ contains a quantum blob $Q=S(B^{2n}(\sqrt{\hbar}))$ ($S\in
\operatorname*{Sp}(n)$);
\end{claim}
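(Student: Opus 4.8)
The plan is to reduce to Williamson's symplectic normal form and then read off the blob containment from the semi-axes, closing the nontrivial converse with Gromov's non-squeezing theorem. Since $\Sigma\in\operatorname{Sym}_{++}(2n,\mathbb{R})$, Williamson's theorem provides $S\in\operatorname{Sp}(n)$ with $S^{T}\Sigma S=D$, where $D=\Lambda\oplus\Lambda$ and $\Lambda=\operatorname{diag}(\lambda_{1}^{\omega},\ldots,\lambda_{n}^{\omega})$ collects the symplectic eigenvalues. Because $\Sigma^{-1}=SD^{-1}S^{T}$, the map $z=S^{-T}\zeta$ sends $\Omega_{D}=\{\zeta:\tfrac{1}{2}D^{-1}\zeta\cdot\zeta<1\}$ onto $\Omega_{\Sigma}$, and $S^{-T}\in\operatorname{Sp}(n)$; thus $\Omega_{\Sigma}=S^{-T}(\Omega_{D})$. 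As elements of $\operatorname{Sp}(n)$ carry quantum blobs to quantum blobs, $\Omega_{\Sigma}$ contains a quantum blob if and only if $\Omega_{D}$ does, so it suffices to argue in the diagonal case.

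I would first settle the forward implication against the symplectic-eigenvalue condition (the preceding Claim). In normal form $\Omega_{D}$ is the ellipsoid $\sum_{j}\tfrac{1}{2\lambda_{j}^{\omega}}(x_{j}^{2}+p_{j}^{2})<1$, whose semi-axis in the $j$-th conjugate pair equals $\sqrt{2\lambda_{j}^{\omega}}$. If every $\lambda_{j}^{\omega}\geq\tfrac{\hbar}{2}$ then every semi-axis is $\geq\sqrt{\hbar}$, so the standard ball $B^{2n}(\sqrt{\hbar})$ lies inside $\Omega_{D}$; applying $S^{-T}$ yields a bona fide quantum blob $S^{-T}(B^{2n}(\sqrt{\hbar}))\subseteq\Omega_{\Sigma}$. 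This half is explicit and needs no deep input.

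The converse carries the real content, and a volume comparison is not enough: containment of a blob only gives $\prod_{j}\lambda_{j}^{\omega}\geq(\hbar/2)^{n}$, strictly weaker than a bound on each eigenvalue — this gap is the main obstacle. It is closed by a normalized symplectic capacity $c$, which is monotone under inclusion, invariant under $\operatorname{Sp}(n)$, and normalized by $c(B^{2n}(r))=\pi r^{2}$ and $c(Z(r))=\pi r^{2}$ on the symplectic cylinder $Z(r)$. Suppose $\Omega_{D}$ contains a blob $S''(B^{2n}(\sqrt{\hbar}))$. Writing $\lambda_{\min}^{\omega}=\min_{j}\lambda_{j}^{\omega}$ and noting that $\Omega_{D}$ sits inside the cylinder of radius $\sqrt{2\lambda_{\min}^{\omega}}$ over the plane realizing the minimum, Gromov's non-squeezing theorem gives $c(\Omega_{D})=2\pi\lambda_{\min}^{\omega}$. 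Monotonicity and invariance then force $\pi\hbar=c(B^{2n}(\sqrt{\hbar}))\leq c(\Omega_{D})=2\pi\lambda_{\min}^{\omega}$, i.e. $\lambda_{j}^{\omega}\geq\tfrac{\hbar}{2}$ for all $j$.

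It remains to tie the eigenvalue condition to (\ref{quant0}). Since $S^{T}JS=J$, the congruence $M\mapsto S^{T}MS$ — a genuine Hermitian congruence as $S$ is real, hence positivity-preserving by Sylvester's law — sends $\Sigma+\tfrac{i\hbar}{2}J$ to $D+\tfrac{i\hbar}{2}J$; reordering the coordinates into conjugate pairs splits the latter into $2\times2$ Hermitian blocks $\bigl(\begin{smallmatrix}\lambda_{j}^{\omega} & i\hbar/2\\ -i\hbar/2 & \lambda_{j}^{\omega}\end{smallmatrix}\bigr)$ with eigenvalues $\lambda_{j}^{\omega}\pm\tfrac{\hbar}{2}$. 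Positivity of every block is exactly $\lambda_{j}^{\omega}\geq\tfrac{\hbar}{2}$, so all three conditions coincide. The single delicate ingredient throughout is the non-squeezing input in the converse; everything else is linear algebra.
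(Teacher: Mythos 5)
Your proof is correct, and in outline it carries out exactly what the paper only sketches: the paper disposes of the eigenvalue claim by Williamson diagonalization (\ref{Williamson}) and of the blob claim with the phrase ``a geometric argument'', deferring to \cite{FOOP,golu09}; your Williamson reduction, the inscription of $B^{2n}(\sqrt{\hbar})$ in the normal-form ellipsoid, and the $2\times2$ Hermitian blocks identifying (\ref{quant0}) with $\lambda_{j}^{\omega}\geq\frac{\hbar}{2}$ are that program made explicit. Where you genuinely diverge is in the converse (blob containment $\Rightarrow$ eigenvalue bound): you close it with a normalized symplectic capacity, i.e.\ Gromov's non-squeezing theorem, in effect routing the paper's second claim through its third. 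That step is sound --- one small imprecision: the cylinder containment by itself only yields $c(\Omega_{D})\leq2\pi\lambda_{\min}^{\omega}$, which happens to be the only inequality you use; the asserted equality would also require the inscribed ball --- but it invokes a deep theorem where linear algebra suffices. Indeed, $S(B^{2n}(\sqrt{\hbar}))\subset\Omega_{\Sigma}$ is equivalent, by a direct computation with the L\"{o}wner order, to $\Sigma\geq\frac{\hbar}{2}SS^{T}$, and then
\[
\Sigma+\tfrac{i\hbar}{2}J\;\geq\;\tfrac{\hbar}{2}\left(SS^{T}+iJ\right)=\tfrac{\hbar}{2}\,S\left(I+iJ\right)S^{T}\;\geq\;0,
\]
since $SJS^{T}=J$ and $I+iJ$ is Hermitian with eigenvalues $0$ and $2$. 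This gives (blob $\Rightarrow$ (\ref{quant0})) directly, which combined with your other two implications closes the equivalence with no symplectic topology at all; this is presumably the ``geometric argument'' of the cited references. The trade-off is clear: your capacity route simultaneously establishes the equivalence with the paper's capacity claim and exhibits the non-squeezing content of the uncertainty principle, while the elementary route is self-contained --- and is arguably more honest here, since blobs and covariance ellipsoids are linear objects, so only linear non-squeezing is ever at stake.
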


\begin{claim}
The symplectic capacity of the covariance ellipsoid is at least $\pi\hbar$.
\end{claim}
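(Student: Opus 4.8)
The plan is to prove that (\ref{quant0}) is equivalent to each of the three claims by passing to a Williamson normal form, which decouples the $2n$-dimensional problem into $n$ independent two-dimensional ones. The key observation is that all four conditions are $\operatorname{Sp}(n)$-invariant: for $S\in\operatorname{Sp}(n)$ one has $S^{T}JS=J$, so the congruence $\Sigma\mapsto S^{T}\Sigma S$ gives $S^{T}(\Sigma+\tfrac{i\hbar}{2}J)S=S^{T}\Sigma S+\tfrac{i\hbar}{2}J$, and by Sylvester's law of inertia this leaves positivity of $\Sigma+\tfrac{i\hbar}{2}J$ unchanged; the same map sends the covariance ellipsoid $\Omega_{\Sigma}$ to another covariance ellipsoid by a symplectic transformation, and it preserves both the symplectic eigenvalues and the symplectic capacity. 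By Williamson's theorem there is $S\in\operatorname{Sp}(n)$ with $S^{T}\Sigma S=D$, where $D=\operatorname{diag}(\Lambda,\Lambda)$ and $\Lambda=\operatorname{diag}(\lambda_{1}^{\omega},\dots,\lambda_{n}^{\omega})$; it therefore suffices to treat $\Sigma=D$.

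For (\ref{quant0}) $\Leftrightarrow$ Claim 1, I would compute that $D+\tfrac{i\hbar}{2}J$ is, after reordering the coordinates as $(x_{1},p_{1},\dots,x_{n},p_{n})$, block-diagonal with $n$ Hermitian blocks $\left(\begin{smallmatrix}\lambda_{j}^{\omega} & i\hbar/2\\ -i\hbar/2 & \lambda_{j}^{\omega}\end{smallmatrix}\right)$. Each block has eigenvalues $\lambda_{j}^{\omega}\pm\tfrac{\hbar}{2}$, so $D+\tfrac{i\hbar}{2}J\geq0$ holds precisely when $\lambda_{j}^{\omega}\geq\tfrac{\hbar}{2}$ for every $j$, which is Claim 1.

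For the geometric statements, note that in normal form $\Omega_{D}=\{\sum_{j}(x_{j}^{2}+p_{j}^{2})/2\lambda_{j}^{\omega}<1\}$. If all $\lambda_{j}^{\omega}\geq\tfrac{\hbar}{2}$ then $1/2\lambda_{j}^{\omega}\leq1/\hbar$, whence $B^{2n}(\sqrt{\hbar})\subset\Omega_{D}$; since the ball is itself the quantum blob with $S=I$, this gives Claim 1 $\Rightarrow$ Claim 2. To pass from Claim 2 to Claim 3 I would use that any quantum blob $Q=S(B^{2n}(\sqrt{\hbar}))$ has symplectic capacity $c(Q)=c(B^{2n}(\sqrt{\hbar}))=\pi\hbar$ by symplectic invariance and the normalization $c(B^{2n}(r))=\pi r^{2}$; monotonicity of $c$ then yields $c(\Omega_{\Sigma})\geq c(Q)=\pi\hbar$. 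Finally, to close the loop via Claim 3 $\Rightarrow$ Claim 1 I would invoke the exact value $c(\Omega_{D})=2\pi\min_{j}\lambda_{j}^{\omega}$, so that $c(\Omega_{\Sigma})\geq\pi\hbar$ forces $\min_{j}\lambda_{j}^{\omega}\geq\tfrac{\hbar}{2}$.

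The routine parts — the congruence identity, the $2\times2$ eigenvalue computation, and the ball inclusion — are elementary linear algebra. The genuine content sits in the step Claim 3 $\Rightarrow$ Claim 1: the identity $c(\Omega_{D})=2\pi\min_{j}\lambda_{j}^{\omega}$ is exactly the point where the capacity is prevented from exceeding $\pi r_{\min}^{2}$, and it rests on Gromov's non-squeezing theorem rather than on linear algebra. I would therefore cite the capacity formula for ellipsoids from \cite{jphysa}, or, to keep the chain self-contained at the linear level, work with the linear symplectic (Gromov) width, whose value on ellipsoids follows from Williamson's theorem together with the two-dimensional squeezing bound.
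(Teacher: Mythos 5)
Your proposal is correct and takes essentially the approach the paper itself points to but does not write out: reduction by Williamson diagonalization (exploiting the $\operatorname{Sp}(n)$-invariance of all four conditions via $S^{T}JS=J$), the $2\times2$ block eigenvalue computation giving $\lambda_{j}^{\omega}\pm\tfrac{\hbar}{2}$, the ball inclusion $B^{2n}(\sqrt{\hbar})\subset\Omega_{D}$, and the standard capacity facts (symplectic invariance, monotonicity, normalization, and the ellipsoid value $c(\Omega_{D})=2\pi\min_{j}\lambda_{j}^{\omega}$, which is where Gromov non-squeezing enters). The paper defers all of this to the literature (\cite{dutta,blob,golu09,FOOP}), so your write-up simply supplies the details of that same standard argument, including the essential closing implication from the capacity bound back to the symplectic eigenvalue bound.
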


Recall \cite{Birk} that the symplectic eigenvalues of a positive definite
matrix $\Sigma$ are the numbers $\lambda_{j}^{\omega}>0$, ($0\leq j\leq n$)
such that the $\pm i\lambda_{j}^{\omega}$ are the eigenvalues of $J\Sigma$
(which are the same as those of the antisymmetric matrix $\Sigma^{1/2}%
J\Sigma^{1/2}$). The terminology \textquotedblleft quantum
blob\textquotedblright\ to denote a symplectic ball with radius $\sqrt{\hbar}$
was introduced in \cite{blob,FOOP}). For the notion of symplectic capacity and
its applications to the uncertainty principle see \cite{FOOP}. The proof of
the first claim is well-known and widespread in the literature, see for
instance \cite{dutta,blob,golu09}. It is easily proven using Williamson's
symplectic diagonalization theorem \cite{Birk} which says that for every
$\Sigma>0$ there exists $S\in\operatorname*{Sp}(n)$ such that%
\begin{equation}
\Sigma=S^{T}DS\text{ \ , \ }D=%
\begin{pmatrix}
\Lambda^{\omega} & 0_{n\times n}\\
0_{n\times n} & \Lambda^{\omega}%
\end{pmatrix}
\label{Williamson}%
\end{equation}
where $\Lambda^{\omega}=\operatorname*{diag}(\lambda_{1}^{\omega}%
,...,\lambda_{n}^{\omega})$. The proof of the second claim easily follows from
a geometric argument; see \cite{FOOP,golu09}. The case where $\lambda
_{j}^{\omega}=\frac{1}{2}\hbar$ for all $j$ is of particular interest; in this
case we have $\Sigma=\frac{\hbar}{2}S^{T}S$ (and hence $\det\Sigma=\left(
\frac{\hbar}{2}\right)  ^{2n}$) and so that
\[
\rho(z)=\frac{1}{(\pi\hbar)^{n}}e^{-\frac{1}{2}S^{-1}(z-z_{0})\cdot
S^{-1}(z-z_{0})}=W\phi_{0}(S^{-1}(z-z_{0}))
\]
that is%
\[
\rho(z)=W(\widehat{S}\phi_{0})(z-z_{0})=W(\widehat{T}(z_{0})\widehat{S}%
\phi_{0})(z)
\]
\ where $\widehat{S}\in\operatorname*{Mp}(n)$ covers $S$ and $\widehat{T}%
(z_{0})$ is the Heisenberg--Weyl displacement operator \cite{Birk}:%
\[
\widehat{T}(z_{0})\widehat{S}\psi(x)=e^{\frac{i}{\hbar}(p_{0}\cdot x-\frac
{1}{2}p_{0}\cdot x_{0})}\psi(x-x_{0}).
\]
From these considerations follows that the Wigner distribution of the Gaussian
state $\widehat{\rho}$ is the Wigner function of a multivariate Gaussian
$\widehat{T}(z_{0})\psi_{X,Y}$ centered at $z_{0}$. We will write the
covariance matrix in block-form
\begin{equation}
\Sigma=%
\begin{pmatrix}
\Sigma_{XX} & \Sigma_{XP}\\
\Sigma_{PX} & \Sigma_{PP}%
\end{pmatrix}
\text{ \ },\text{ \ }\Sigma_{PX}=\Sigma_{XP}^{T} \label{sigma1}%
\end{equation}
with $\Sigma_{XX}=(\sigma_{x_{j}x_{k}})_{1\leq j,k\leq n}$, $\Sigma
_{PP}=(\sigma_{p_{j}p_{k}})_{1\leq j,k\leq n}$, $\Sigma_{PP}=(\sigma
_{x_{j}p_{k}})_{1\leq j,k\leq n}$.

\begin{proposition}
\label{Thm1}Let $\Sigma$ be the covariance matrix of the Gaussian state
$\widehat{\rho}$; (i) This state is pure, that is there exists $\psi_{X,Y}$
such that $\rho=\psi_{X,Y}$, if and only if
\begin{gather}
\Sigma_{XX}\Sigma_{PP}-(\Sigma_{XP})^{2}=\frac{1}{4}\hbar^{2}I_{n\times
n}\label{RS1}\\
\Sigma_{XX}\Sigma_{XP}=\Sigma_{PX}\Sigma_{XX}\text{ },\text{ }\Sigma
_{PX}\Sigma_{PP}=\Sigma_{PP}\Sigma_{XP}. \label{RS2}%
\end{gather}
(ii) We have $(\Sigma_{XP})^{2}\geq0$, i.e. the eigenvalues of $(\Sigma
_{XP})^{2}$ are $\geq0$ hence the generalizes Heisenberg inequality
$\Sigma_{XX}\Sigma_{PP}\geq\frac{\hbar^{2}}{4}I_{n\times n}$ holds.
\end{proposition}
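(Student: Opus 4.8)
The plan is to reduce purity to a single algebraic identity on $\Sigma$ and then obtain (\ref{RS1})--(\ref{RS2}) by expanding that identity in $n\times n$ blocks. First I would record the symplectic-eigenvalue characterization of purity: for a quantum state all symplectic eigenvalues satisfy $\lambda_{j}^{\omega}\geq\tfrac{\hbar}{2}$ (the first Claim), while $\det\Sigma=\prod_{j}(\lambda_{j}^{\omega})^{2}$ by Williamson (\ref{Williamson}); hence the purity criterion $\det\Sigma=(\tfrac{\hbar}{2})^{2n}$ forces $\lambda_{j}^{\omega}=\tfrac{\hbar}{2}$ for every $j$. Since the $\pm i\lambda_{j}^{\omega}$ are the eigenvalues of $J\Sigma$, and $J\Sigma$ is diagonalizable (it is conjugate to the antisymmetric $\Sigma^{1/2}J\Sigma^{1/2}$), this is exactly $(J\Sigma)^{2}=-\tfrac{\hbar^{2}}{4}I$, equivalently
\[
\Sigma J\Sigma=\tfrac{\hbar^{2}}{4}J .
\]

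Next I would expand this identity in the block form (\ref{sigma1}). Because $(\Sigma J\Sigma)^{T}=-\Sigma J\Sigma$ for symmetric $\Sigma$, the matrix $\Sigma J\Sigma$ is automatically antisymmetric, so the identity contains only three independent block equations. The two off-diagonal blocks are transposes of one another and each amounts to $\Sigma_{XX}\Sigma_{PP}-(\Sigma_{XP})^{2}=\tfrac{1}{4}\hbar^{2}I$, which is (\ref{RS1}); the two diagonal blocks give the vanishing of $\Sigma_{XX}\Sigma_{PX}-\Sigma_{XP}\Sigma_{XX}$ and of $\Sigma_{PX}\Sigma_{PP}-\Sigma_{PP}\Sigma_{XP}$, that is, the commutation relations (\ref{RS2}). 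As every manipulation is an equivalence, this settles (i) in both directions at once: purity yields (\ref{RS1})--(\ref{RS2}), and conversely these reassemble into $\Sigma J\Sigma=\tfrac{\hbar^{2}}{4}J$. To identify the state explicitly I would invert the relations coming from (\ref{gsym})--(\ref{bi}), taking $X=\tfrac{\hbar}{2}\Sigma_{XX}^{-1}>0$ and $Y=-\Sigma_{XX}^{-1}\Sigma_{XP}$; here (\ref{RS2}) is precisely what guarantees that $Y$ is symmetric, so that $\rho=W(\widehat{T}(z_{0})\psi_{X,Y})$.

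For (ii) I would use (\ref{RS2}) to show that $\Sigma_{XP}$, although not symmetric, has real spectrum. Indeed (\ref{RS2}) is equivalent to $\Sigma_{XX}^{-1}\Sigma_{XP}$ being symmetric, so $\Sigma_{XP}=\Sigma_{XX}(\Sigma_{XX}^{-1}\Sigma_{XP})$ is the product of a positive-definite matrix and a symmetric matrix, and is therefore conjugate to the symmetric matrix $\Sigma_{XX}^{1/2}(\Sigma_{XX}^{-1}\Sigma_{XP})\Sigma_{XX}^{1/2}$; hence its eigenvalues are real and $(\Sigma_{XP})^{2}\geq0$. Substituting into (\ref{RS1}) gives $\Sigma_{XX}\Sigma_{PP}=\tfrac{1}{4}\hbar^{2}I+(\Sigma_{XP})^{2}$, and since adding the scalar matrix $\tfrac{1}{4}\hbar^{2}I$ raises every eigenvalue by $\tfrac{1}{4}\hbar^{2}$, the eigenvalues of $\Sigma_{XX}\Sigma_{PP}$ are all $\geq\tfrac{1}{4}\hbar^{2}$, which is the generalized Heisenberg inequality.

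I expect the main obstacle to be the sufficiency half of (i): checking that (\ref{RS1})--(\ref{RS2}) genuinely reassemble into $\Sigma J\Sigma=\tfrac{\hbar^{2}}{4}J$ requires careful bookkeeping of the non-commutativity of the blocks — it is exactly the commutation relations (\ref{RS2}), and not any automatic identity, that make the two diagonal blocks vanish — together with translating that identity back into purity via the symplectic-eigenvalue characterization. The remaining steps are routine block multiplication.
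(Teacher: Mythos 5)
Your proof is correct, and at bottom it rests on the same fact as the paper's --- purity is equivalent to $\tfrac{2}{\hbar}\Sigma$ being a symmetric, positive definite \emph{symplectic} matrix, after which (\ref{RS1})--(\ref{RS2}) are read off blockwise --- but you reach and exploit that fact by a different route. The paper writes $\Sigma=\tfrac{\hbar}{2}S^{T}S$ (Williamson) and simply cites the block conditions (\ref{cond1}); you instead combine the determinant purity criterion $\det\Sigma=(\hbar/2)^{2n}$ with the lower bound $\lambda_{j}^{\omega}\geq\hbar/2$ and the diagonalizability of $J\Sigma$ to get the single identity $\Sigma J\Sigma=\tfrac{\hbar^{2}}{4}J$, which you then multiply out. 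What your version buys: (a) the sufficiency half of (i) is genuinely proved, since each step is an equivalence and you recover the state explicitly via $X=\tfrac{\hbar}{2}\Sigma_{XX}^{-1}$, $Y=-\Sigma_{XX}^{-1}\Sigma_{XP}$, whereas the paper's one-line argument only addresses necessity; (b) your argument for (ii) --- $\Sigma_{XP}$ is a positive definite matrix times a symmetric one, hence similar to a symmetric matrix, hence has real spectrum --- is airtight, while the paper's reasoning ($\Sigma_{XP}$ and $\Sigma_{PX}=\Sigma_{XP}^{T}$ have the same eigenvalues, ``hence'' these are real) is not conclusive as stated, since \emph{every} matrix shares its spectrum with its transpose.

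One point deserves emphasis. Your block expansion of $\Sigma J\Sigma=\tfrac{\hbar^{2}}{4}J$ yields the diagonal-block condition $\Sigma_{XX}\Sigma_{PX}=\Sigma_{XP}\Sigma_{XX}$ (equivalently: $\Sigma_{XX}^{-1}\Sigma_{XP}$ is symmetric), whereas the first equation of (\ref{RS2}) as printed reads $\Sigma_{XX}\Sigma_{XP}=\Sigma_{PX}\Sigma_{XX}$ (equivalently: $\Sigma_{XX}\Sigma_{XP}$ is symmetric). These two conditions are \emph{not} equivalent in general, and yours is the correct one: for the pure state $\psi_{X,Y}$ one has $\Sigma_{XX}=\tfrac{\hbar}{2}X^{-1}$ and $\Sigma_{XP}=-\tfrac{\hbar}{2}X^{-1}Y$, so that $\Sigma_{XX}\Sigma_{PX}=-\tfrac{\hbar^{2}}{4}X^{-1}YX^{-1}$ is always symmetric, while $\Sigma_{XX}\Sigma_{XP}=-\tfrac{\hbar^{2}}{4}X^{-2}Y$ is symmetric only when $X^{2}$ commutes with $Y$; the same discrepancy appears if one applies (\ref{cond1}) to $\tfrac{2}{\hbar}\Sigma$ as the paper's proof prescribes. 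So when you label your diagonal-block equations as (\ref{RS2}) you are silently correcting a transposition typo in the proposition; this is a defect of the printed statement, not a gap in your argument (the second equation of (\ref{RS2}) is fine as printed, and your derivation confirms it).
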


\begin{proof}
Suppose that $\widehat{\rho}$ is a pure state; then $\Sigma=\frac{\hbar}%
{2}S^{T}S$ and the conditions (\ref{RS1})--(\ref{RS2}) are just a restatement
of the relations (\ref{cond1}) in APPENDIX\ A, taking into account the fact
that $S^{T}S>0$. To prove that $(\Sigma_{XP})^{2}\geq0$ we note that since
$\Sigma_{XX}\Sigma_{XP}=\Sigma_{PX}\Sigma_{XX}$ we have $\Sigma_{XP}%
=\Sigma_{XX}^{-1}\Sigma_{PX}\Sigma_{XX}$ hence $\Sigma_{XP}$ and $\Sigma_{PX}$
have the same eigenvalues; since $\Sigma_{PX}=\Sigma_{XP}^{T}$ these
eigenvalues must be real, hence those of $(\Sigma_{XP})^{2}$ are $\geq0$.
\end{proof}

In particular, when $n=1$ one recovers the usual saturated
Robertson--Schr\"{o}dinger uncertainty principle
\[
\sigma_{xx}\sigma_{pp}-\sigma_{xp}^{2}=\frac{\hbar^{2}}{4}%
\]
satisfied by all pure one-dimensional Gaussian states.

Note that when $\Sigma$ is diagonal, \textit{i.e.} the state is a tensor
product of one-dimensional functions one recovers the well-known fact that the
Heisenberg uncertainty inequalities are saturated (reduce to equalities):
$\sigma_{x_{j}xj}\sigma_{p_{j}p_{j}}=\hbar^{2}/4$ for all $j=1,...,n$.

\subsection{Orthogonal projections of the covariance ellipsoid}

We begin with a general result. For $M\in\operatorname*{Sym}_{++}%
(2n,\mathbb{R})$ we define the phase space ellipsoid%
\begin{equation}
\Omega=\{z\in\mathbb{R}^{2n}:Mz\cdot z\leq\hbar\}. \label{Mellipse}%
\end{equation}

Writing $M$ in block-matrix form $M=%
\begin{pmatrix}
M_{XX} & M_{XP}\\
M_{PX} & M_{PP}%
\end{pmatrix}
$ where the blocks are $n\times n$ matrices the condition $M\in
\operatorname*{Sym}_{++}(2n,\mathbb{R})$ ensures us \cite{zhang} that
$M_{XX}>0$, $M_{PP}>0$, and $M_{PX}=M_{XP}^{T}$ . Using classical formulas for
the inversion of block matrices \cite{Tzon} the inverse of $M$ is
\begin{equation}
M^{-1}=%
\begin{pmatrix}
(M/M_{PP})^{-1} & -(M/M_{PP})^{-1}M_{XP}M_{PP}^{-1}\\
-M_{PP}^{-1}M_{PX}(M/M_{PP})^{-1} & (M/M_{XX})^{-1}%
\end{pmatrix}
\label{Minverse}%
\end{equation}
where $M/M_{PP}$ and $M/M_{XX}$ are the Schur complements:%
\begin{align}
M/M_{PP}  &  =M_{XX}-M_{XP}M_{PP}^{-1}M_{PX}\label{schurm1}\\
M/M_{XX}  &  =M_{PP}-M_{PX}M_{XX}^{-1}M_{XP}. \label{schurm2}%
\end{align}
The following results is well-known (see for instance \cite{gopolar} and the
references therein):

\begin{lemma}
\label{LemmaProj}The orthogonal projections $\Pi_{\ell_{X}}\Omega$ and
$P=\Pi_{\ell_{P}}\Omega$ on the coordinate subspaces $\ell_{X}=\mathbb{R}%
_{x}^{n}\times0$ and $\ell_{P}=0\times\mathbb{R}_{p}^{n}$ of the ellipsoid
$\Omega$ are the ellipsoids%
\begin{align}
\Pi_{\ell_{X}}\Omega &  =\{x\in\mathbb{R}_{x}^{n}:(M/M_{PP})x\cdot x\leq
\hbar\}\label{boundb}\\
\Pi_{\ell_{P}}\Omega &  =\{p\in\mathbb{R}_{p}^{n}:(M/M_{XX})p\cdot p\leq
\hbar\}. \label{bounda}%
\end{align}

\end{lemma}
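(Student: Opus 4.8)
The plan is to show that the orthogonal projection of the ellipsoid $\Omega=\{z: Mz\cdot z\leq\hbar\}$ onto the coordinate subspace $\ell_X$ is itself an ellipsoid governed by the Schur complement $M/M_{PP}$. The key observation is that projecting a solid ellipsoid onto a subspace is a \emph{maximization} (support-function) operation: a point $x\in\mathbb{R}^n_x$ lies in $\Pi_{\ell_X}\Omega$ if and only if there exists a $p\in\mathbb{R}^n_p$ with $(x,p)\in\Omega$, i.e. $x$ is in the projection exactly when the \emph{minimum} over $p$ of the quadratic form $M(x,p)\cdot(x,p)$ does not exceed $\hbar$.

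The steps I would carry out:

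\begin{itemize}
\item First, write out the quadratic form explicitly in block form, for fixed $x$, as a function of $p$:
\[
Q(x,p)=M_{XX}x\cdot x+2M_{XP}p\cdot x+M_{PP}p\cdot p,
\]
using $M_{PX}=M_{XP}^T$ to combine the two cross terms.

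\item Second, minimize $Q(x,\cdot)$ over $p$. Since $M_{PP}>0$ (guaranteed by $M\in\operatorname*{Sym}_{++}(2n,\mathbb{R})$), the form is strictly convex in $p$ and the minimizer is found by setting the $p$-gradient to zero, giving $p_\ast=-M_{PP}^{-1}M_{PX}x$.

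\item Third, substitute $p_\ast$ back in. The standard completion-of-squares computation collapses the minimum value to
\[
\min_p Q(x,p)=\bigl(M_{XX}-M_{XP}M_{PP}^{-1}M_{PX}\bigr)x\cdot x=(M/M_{PP})x\cdot x,
\]
which is exactly the Schur complement (\ref{schurm1}).

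\item Fourth, conclude that $x\in\Pi_{\ell_X}\Omega \iff \min_p Q(x,p)\leq\hbar \iff (M/M_{PP})x\cdot x\leq\hbar$, yielding (\ref{boundb}). The identical argument with the roles of $x$ and $p$ interchanged (minimizing over $x$, using $M_{XX}>0$) gives (\ref{bounda}) with the Schur complement $M/M_{XX}$ of (\ref{schurm2}).
\end{itemize}

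The only genuinely substantive point is the geometric one — that the \emph{orthogonal projection} of the solid ellipsoid corresponds to the \emph{minimization} of the quadratic form over the fibre direction, rather than the maximization one might naively expect for a support-function computation. The resolution is that the projection of a convex body onto a subspace $\ell_X$ consists of precisely those $x$ for which the affine fibre $\{x\}\times\mathbb{R}^n_p$ meets the body, i.e. for which $\min_p Q(x,p)\leq\hbar$; there is no appeal to support functions at all. Everything else is the routine completion-of-squares identity for Schur complements, which is why the result is stated as well-known. I would also remark that one recognizes $\min_p Q$ as the partial-minimization / marginalization of a Gaussian, so formulas (\ref{boundb})--(\ref{bounda}) are exactly the covariance ellipsoids of the position and momentum marginals — a fact that will be used when passing to marginal distributions in Theorem \ref{ThmRadon}.
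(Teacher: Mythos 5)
Your proof is correct, and it is worth noting that the paper itself offers no proof of this lemma at all: it is stated as well-known with a pointer to \cite{gopolar} and the references therein. Your argument supplies exactly the standard justification that the citation stands in for. The core identification is right: $x\in\Pi_{\ell_{X}}\Omega$ if and only if the fibre $\{x\}\times\mathbb{R}_{p}^{n}$ meets $\Omega$, i.e. if and only if $\min_{p}Q(x,p)\leq\hbar$, and the completion of squares with minimizer $p_{\ast}=-M_{PP}^{-1}M_{PX}x$ (legitimate because $M_{PP}>0$ follows from $M>0$) collapses the minimum to $(M/M_{PP})x\cdot x$; the symmetric argument gives (\ref{bounda}). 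One small criticism of the write-up rather than the mathematics: your opening sentence calls the projection a \emph{maximization} (support-function) operation, which contradicts the minimization characterization you then correctly use and later defend; that sentence should simply be deleted, since, as you yourself observe, no support functions are needed — projection of a convex body is fibre-intersection, hence partial minimization. Your closing remark is also apt: since $M=\tfrac{\hbar}{2}\Sigma^{-1}$ turns the Schur complement $(M/M_{PP})^{-1}$ into $\tfrac{2}{\hbar}\Sigma_{XX}$ (formula (\ref{msig1})), the lemma is precisely the statement that the projected ellipsoids are the covariance ellipsoids of the marginals, which is how the paper uses it in Proposition \ref{PropProj} and downstream in the tomographic results.
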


Consider now a Gaussian state (\ref{rhoG}) with covariance matrix\ $\Sigma=%
\begin{pmatrix}
\Sigma_{XX} & \Sigma_{XP}\\
\Sigma_{PX} & \Sigma_{PP}%
\end{pmatrix}
$; by definition the covariance (or Wigner) ellipsoid of this state is
\begin{equation}
\Omega_{\Sigma}=\{z:\tfrac{1}{2}\Sigma^{-1}z\cdot z\leq1\}. \label{omegas}%
\end{equation}
Setting $M=\frac{1}{2}\hbar\Sigma^{-1}$ the results above yield the inverse of
the covariance matrix $\Sigma$:
\begin{equation}
\Sigma^{-1}=%
\begin{pmatrix}
(\Sigma/\Sigma_{PP})^{-1} & -(\Sigma/\Sigma_{PP})^{-1}\Sigma_{XP}\Sigma
_{PP}^{-1}\\
-\Sigma_{PP}^{-1}\Sigma_{PX}(\Sigma/\Sigma_{PP})^{-1} & (\Sigma/\Sigma
_{XX})^{-1}%
\end{pmatrix}
. \label{covinv}%
\end{equation}
Formula (\ref{covinv}) immediately follows from (\ref{Minverse}) using the
relations
\begin{gather}
\Sigma_{XX}=\frac{\hbar}{2}(M/M_{PP})^{-1}\text{ },\text{ }\Sigma_{PP}%
=\frac{\hbar}{2}(M/M_{XX})^{-1}\label{msig1}\\
\Sigma_{XP}=-\frac{\hbar}{2}(M/M_{PP})^{-1}M_{XP}M_{PP}^{-1}. \label{msig2}%
\end{gather}

\begin{proposition}
\label{PropProj}The orthogonal projections on the canonical coordinate
subspaces $\ell_{X}$ and $\ell_{P}$ of $\Omega_{\Sigma}$ are
\begin{align}
\Pi_{\ell_{X}}\Omega_{\Sigma}  &  =\{x\in\mathbb{R}_{x}^{n}:\tfrac{1}{2}%
\Sigma_{XX}^{-1}x\cdot x\leq1\}\label{boundc}\\
\Pi_{\ell_{P}}\Omega_{\Sigma}  &  =\{p\in\mathbb{R}_{p}^{n}:\tfrac{1}{2}%
\Sigma_{PP}^{-1}p\cdot p\leq1\}. \label{boundd}%
\end{align}

\end{proposition}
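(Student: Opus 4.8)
The plan is to reduce the statement to Lemma \ref{LemmaProj} by means of the substitution $M=\tfrac{1}{2}\hbar\Sigma^{-1}$ already used to pass from (\ref{Mellipse}) to (\ref{covinv}), and then to translate the Schur complements appearing in the projections back into the covariance blocks via the relations (\ref{msig1}).

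First I would note that with $M=\tfrac{1}{2}\hbar\Sigma^{-1}$ the inequality $Mz\cdot z\leq\hbar$ defining $\Omega$ in (\ref{Mellipse}) coincides with the inequality $\tfrac{1}{2}\Sigma^{-1}z\cdot z\leq1$ defining $\Omega_{\Sigma}$ in (\ref{omegas}); thus $\Omega=\Omega_{\Sigma}$, and since $\Sigma>0$ forces $M\in\operatorname*{Sym}_{++}(2n,\mathbb{R})$, Lemma \ref{LemmaProj} applies directly and gives
\[
\Pi_{\ell_{X}}\Omega_{\Sigma}=\{x:(M/M_{PP})x\cdot x\leq\hbar\},\qquad\Pi_{\ell_{P}}\Omega_{\Sigma}=\{p:(M/M_{XX})p\cdot p\leq\hbar\}.
\]

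Next I would eliminate the Schur complements. Inverting the two identities in (\ref{msig1}) yields $M/M_{PP}=\tfrac{\hbar}{2}\Sigma_{XX}^{-1}$ and $M/M_{XX}=\tfrac{\hbar}{2}\Sigma_{PP}^{-1}$; inserting these into the two sets above and cancelling a common factor $\hbar$ in each inequality produces precisely (\ref{boundc}) and (\ref{boundd}). There is essentially no genuine obstacle here: the only step that is not pure bookkeeping is the passage (\ref{msig1}), which is itself read off from the block-inversion formula (\ref{Minverse}). If a fully self-contained argument were wanted I would instead compute the diagonal blocks of $\Sigma=(\tfrac{2}{\hbar}M)^{-1}$ directly from (\ref{Minverse}) to reconfirm $\Sigma_{XX}=\tfrac{\hbar}{2}(M/M_{PP})^{-1}$ and $\Sigma_{PP}=\tfrac{\hbar}{2}(M/M_{XX})^{-1}$, so that the whole proposition rests only on Lemma \ref{LemmaProj} together with the Schur-complement inversion and a scaling.
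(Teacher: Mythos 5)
Your proposal is correct and is essentially the paper's own argument: the paper's proof simply cites Lemma \ref{LemmaProj} together with the identification $M=\tfrac{1}{2}\hbar\Sigma^{-1}$ and the relations (\ref{msig1}), which is exactly the reduction and Schur-complement translation you carry out in detail.
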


\begin{proof}
The projection formulas (\ref{boundc}) and (\ref{boundd}) are a consequence of
the corresponding formulas (\ref{boundb}) and (\ref{bounda}).
\end{proof}

We next note the following remarkable fact showing that the notion of polar
duality is related to the uncertainty principle (see APPENDIX C for a short
review of polar duality):

\begin{proposition}
\label{Thm2}(i) The ellipsoids $X=\Pi_{\ell_{X}}\Omega_{\Sigma}$ and
$P=\Pi_{\ell_{P}}\Omega_{\Sigma}$ are such that $X^{\hbar}\subset P$ where
\begin{equation}
X^{\hslash}=\{p\in\mathbb{R}_{p}^{n}:\sup\nolimits_{x\in X}(p\cdot x)\leq
\hbar\} \label{omo2}%
\end{equation}
is the $\hbar$-polar dual of $X$. (ii) We have the equality $X^{\hbar}=P$ if
and only if $\Omega_{\Sigma}=B^{2n}(\sqrt{\hbar})$ in which case $X=B_{X}%
^{n}(\sqrt{\hbar})$ and $P=B_{P}^{n}(\sqrt{\hbar})$ (the centered balls with
radius $\sqrt{\hbar}$ in $\ell_{X}$ and $\ell_{P}$, respectively)
\end{proposition}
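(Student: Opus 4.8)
The plan is to reduce both assertions to a single Löwner inequality between the blocks $\Sigma_{XX}$ and $\Sigma_{PP}$, by first describing the polar dual $X^{\hbar}$ explicitly as an ellipsoid and then comparing it with $P$ via Proposition \ref{PropProj}.

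First I would compute the support function of $X=\Pi_{\ell_{X}}\Omega_{\Sigma}=\{x:\tfrac{1}{2}\Sigma_{XX}^{-1}x\cdot x\le1\}$. The substitution $x=\sqrt{2}\,\Sigma_{XX}^{1/2}u$ turns the constraint into $|u|\le1$ and gives $h_{X}(p)=\sup_{x\in X}p\cdot x=\sqrt{2\,\Sigma_{XX}p\cdot p}$. Hence the polar dual is again an ellipsoid,
\[
X^{\hbar}=\{p:h_{X}(p)\le\hbar\}=\Big\{p:\tfrac{2}{\hbar^{2}}\Sigma_{XX}p\cdot p\le1\Big\}.
\]
Since $P=\{p:\tfrac{1}{2}\Sigma_{PP}^{-1}p\cdot p\le1\}$ by Proposition \ref{PropProj}, and a larger defining quadratic form yields a smaller ellipsoid, the inclusion $X^{\hbar}\subset P$ is equivalent to $\tfrac{2}{\hbar^{2}}\Sigma_{XX}\ge\tfrac{1}{2}\Sigma_{PP}^{-1}$, that is
\[
\Sigma_{XX}\ge\tfrac{\hbar^{2}}{4}\Sigma_{PP}^{-1},\qquad\text{equivalently}\qquad\Sigma_{XX}^{1/2}\Sigma_{PP}\Sigma_{XX}^{1/2}\ge\tfrac{\hbar^{2}}{4}I.
\]

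To establish this inequality I would invoke the quantum condition (\ref{quant0}). Writing $H=\Sigma+\tfrac{i\hbar}{2}J\ge0$ in block form, its lower-right block $\Sigma_{PP}$ is positive definite, so $H\ge0$ is equivalent to positivity of the Schur complement $\Sigma_{XX}-(\Sigma_{XP}+\tfrac{i\hbar}{2}I)\Sigma_{PP}^{-1}(\Sigma_{XP}^{T}-\tfrac{i\hbar}{2}I)\ge0$. Evaluating this Hermitian matrix on real vectors annihilates its antisymmetric imaginary part and leaves the real symmetric inequality $\Sigma_{XX}-\Sigma_{XP}\Sigma_{PP}^{-1}\Sigma_{XP}^{T}-\tfrac{\hbar^{2}}{4}\Sigma_{PP}^{-1}\ge0$. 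As $\Sigma_{XP}\Sigma_{PP}^{-1}\Sigma_{XP}^{T}\ge0$ this yields $\Sigma_{XX}\ge\tfrac{\hbar^{2}}{4}\Sigma_{PP}^{-1}$, which is exactly the generalized Heisenberg inequality of Proposition \ref{Thm1}(ii) in Löwner form, and proves (i).

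For (ii), the forward implication is immediate: if $\Omega_{\Sigma}=B^{2n}(\sqrt{\hbar})$ then $\Sigma=\tfrac{\hbar}{2}I$, so $X=B_{X}^{n}(\sqrt{\hbar})$ and a one-line computation gives $X^{\hbar}=B_{P}^{n}(\sqrt{\hbar})=P$. For the converse, equality $X^{\hbar}=P$ is equivalent to $\Sigma_{XX}=\tfrac{\hbar^{2}}{4}\Sigma_{PP}^{-1}$; substituting this into the Schur-complement inequality forces $\Sigma_{XP}\Sigma_{PP}^{-1}\Sigma_{XP}^{T}=0$, hence $\Sigma_{XP}=0$, together with the saturated identity $\Sigma_{XX}\Sigma_{PP}=\tfrac{\hbar^{2}}{4}I$. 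I expect this last passage to be the main obstacle: saturation together with vanishing correlations does not by itself pin down $\Sigma=\tfrac{\hbar}{2}I$, since any uncorrelated pure product state with $\Sigma_{PP}=\tfrac{\hbar^{2}}{4}\Sigma_{XX}^{-1}$ also satisfies $X^{\hbar}=P$. Closing the gap to the ball therefore appears to need an extra normalization — for instance that $X$ (equivalently $\Sigma_{XX}=\tfrac{\hbar}{2}I$) already be the centered ball $B_{X}^{n}(\sqrt{\hbar})$ — after which $\Sigma_{PP}=\tfrac{\hbar}{2}I$ and $\Omega_{\Sigma}=B^{2n}(\sqrt{\hbar})$ follow at once.
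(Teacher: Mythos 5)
Your part (i) is correct and follows essentially the same route as the paper: both arguments identify $X^{\hbar}$ explicitly as the ellipsoid $\{p:\tfrac{2}{\hbar^{2}}\Sigma_{XX}p\cdot p\leq1\}$ and reduce the inclusion $X^{\hbar}\subset P$ to the L\"{o}wner inequality $\Sigma_{XX}\geq\tfrac{\hbar^{2}}{4}\Sigma_{PP}^{-1}$, i.e.\ to the statement that the eigenvalues of $\Sigma_{XX}\Sigma_{PP}$ are all $\geq\hbar^{2}/4$. The difference is in how that inequality is justified. The paper invokes Proposition \ref{Thm1}(ii), whose proof uses the pure-state relations (\ref{RS1})--(\ref{RS2}); moreover its intermediate bookkeeping is off (it asserts $X^{\hbar}\subset P$ iff $AB\geq I_{n\times n}$, and then that $\tfrac{\hbar^{2}}{4}\Sigma_{XX}^{-1}\Sigma_{PP}^{-1}\geq I_{n\times n}$ is \emph{equivalent} to $\Sigma_{XX}\Sigma_{PP}\geq\tfrac{\hbar^{2}}{4}$ --- two direction errors that cancel). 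You instead derive the inequality directly from the quantum condition (\ref{quant0}) by taking the Schur complement of $\Sigma+\tfrac{i\hbar}{2}J$ with respect to the block $\Sigma_{PP}$ and restricting the resulting Hermitian form to real vectors, which kills the antisymmetric imaginary part. This is self-contained, has the correct signs throughout, and is valid for arbitrary (mixed) quantum covariance matrices, so it genuinely strengthens this step.

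For part (ii), the obstacle you flag is real, and it is a defect of the statement rather than of your argument. As you show, $X^{\hbar}=P$ is equivalent to $\Sigma_{XX}=\tfrac{\hbar^{2}}{4}\Sigma_{PP}^{-1}$, and the Schur-complement inequality then forces $\Sigma_{XP}=0$; conversely, every $\Sigma=\operatorname{diag}(\Sigma_{XX},\tfrac{\hbar^{2}}{4}\Sigma_{XX}^{-1})$ satisfies $X^{\hbar}=P$. For such $\Sigma$ the covariance ellipsoid is the quantum blob $M_{L}(B^{2n}(\sqrt{\hbar}))$ with $M_{L}=\operatorname{diag}(L^{-1},L^{T})$, $L=L^{T}=(\tfrac{2}{\hbar}\Sigma_{XX})^{-1/2}$, which equals $B^{2n}(\sqrt{\hbar})$ only when $\Sigma_{XX}=\tfrac{\hbar}{2}I_{n\times n}$. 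Concretely, for $n=1$ the squeezed vacuum $\sigma_{xx}=\tfrac{\hbar}{2}a$, $\sigma_{pp}=\tfrac{\hbar}{2a}$, $\sigma_{xp}=0$, $a\neq1$, gives $X^{\hbar}=P$ while $\Omega_{\Sigma}$ is a proper ellipse, so the ``only if'' in (ii) is false as written. This is consistent with the paper's own Example \ref{example1}, which concludes that equality holds iff the Heisenberg inequality is saturated (not iff $\Omega_{\Sigma}$ is a ball), and with the paper's one-sentence proof of (ii) (``$X^{\hbar}=P$ if and only if $AB=I_{n\times n}$''), which establishes precisely your weaker equivalence and nothing more. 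The statement can be repaired either by the normalization you suggest, or by replacing the conclusion with: $\Omega_{\Sigma}$ is a quantum blob of the block-diagonal form $M_{L}(B^{2n}(\sqrt{\hbar}))$, i.e.\ the covariance ellipsoid of an uncorrelated pure squeezed state.
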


\begin{proof}
(i) Recall \cite{Bull,FOOPMC} that if.
\begin{equation}
X=\{x\in\mathbb{R}_{x}^{n}:Ax\cdot x\leq\hbar\} \label{Xell}%
\end{equation}
with $A\in\operatorname*{Sym}_{++}(n,\mathbb{R})$, then its $\hbar
$-polar$X^{\hslash}$ is the ellipsoid%
\begin{equation}
X^{\hslash}=\{p\in\mathbb{R}_{p}^{n}:A^{-1}p\cdot p\leq\hbar\}.
\label{Xelldual}%
\end{equation}
It follows that if
\begin{equation}
P=\{p\in\mathbb{R}_{p}^{n}:Bp\cdot p\leq\hbar\} \label{Polar}%
\end{equation}
(with $B\in\operatorname*{Sym}_{++}(n,\mathbb{R})$) the inclusion $X^{\hbar
}\subset P$ holds if and only if $AB\geq I_{n\times n}$ (\textit{i.e.} the
eigenvalues of $AB$ are all $\geq1$); this is equivalent to $BA\geq I_{n\times
n}$. Since we have here $A=\frac{1}{2}\hbar\Sigma_{XX}^{-1}$ and $B=\frac
{1}{2}\hbar\Sigma_{PP}^{-1}$ (formulas (\ref{boundc}) and (\ref{boundb})) the
inclusion $X^{\hbar}\subset P$ will hold if and only if $\frac{1}{4}\hbar
^{2}\Sigma_{XX}^{-1}\Sigma_{PP}^{-1}\geq I_{n\times n}$ or, equivalently
$\Sigma_{XX}\Sigma_{PP}\geq$ $\frac{1}{4}\hbar^{2}$. But this is the
generalized Heisenberg inequality of Proposition \ref{Thm1}. (ii) We have
$X^{\hbar}=P$ if and only if $AB=I_{n\times n}$.
\end{proof}

\begin{example}
\label{example1}Let us illustrate the result above in the case $n=1$. Here
$\Sigma_{XX}=\sigma_{xx}>0$, $\Sigma_{PP}=\sigma_{pp}>0$, and $\Sigma
_{XP}=\Sigma_{PX}=\sigma_{xp}$ and the covariance ellipse is
\begin{equation}
\Omega_{\Sigma}:\dfrac{\sigma_{pp}}{2D}x^{2}-\frac{\sigma_{xp}}{D}%
px+\dfrac{\sigma_{xx}}{2D}p^{2}\leq1 \label{2D}%
\end{equation}
where $D=\sigma_{xx}\sigma_{pp}-\sigma_{xp}^{2}\geq\tfrac{1}{4}\hbar^{2}$. The
orthogonal projections $\Omega_{X}$ and $\Omega_{P}$ of $\Omega$ on the $x$
and $p$ coordinate axes are the intervals
\begin{equation}
\Omega_{X}=[-\sqrt{2\sigma_{xx}},\sqrt{2\sigma_{xx}}]\text{\ \textit{\ , \ }
}\Omega_{P}=[-\sqrt{2\sigma_{pp}},\sqrt{2\sigma_{pp}}\dot{]}~.
\label{intervals}%
\end{equation}
Let $\Omega_{X}^{\hbar}$ be the polar dual of $\Omega_{X}$: it is the set of
all numbers $p$ such that $px\leq\hbar$\ for $-\sqrt{2\sigma_{xx}}\leq
x\leq\sqrt{2\sigma_{xx}}$and is thus the interval \
\[
\Omega_{X}^{\hbar}=[-\hbar/\sqrt{2\sigma_{xx}},\hbar/\sqrt{2\sigma_{xx}}]~.
\]
Since $\sigma_{xx}\sigma_{pp}\geq\frac{1}{2}\hbar$ we have the inclusion
\begin{equation}
\Omega_{X}^{\hbar}\subset\Omega_{P} \label{xompom}%
\end{equation}
which reduces to the equality $\Omega_{X}^{\hbar}=\Omega_{P}$ if and only if
the Heisenberg inequality is saturated, \textit{i.e.} $\sigma_{xx}\sigma
_{pp}=\frac{1}{4}\hbar^{2}$ which is equivalent to $\sigma_{xp}=0$.
\end{example}

\section{Pauli's Problem and its Generalizations\label{SecRec1}}

\subsection{Paulis' reconstruction problem}

The Pauli reconstruction problem is a particular case of a larger class of
phase retrieval problems; see Grohs and Liehr \cite{Grohs} for recent advances
on this difficult topic. Pauli asked in \cite{Pauli} whether the probability
densities $|\psi(x)|^{2}$ and $|\widehat{\psi}(p)|^{2}$ of a normed function
$\psi\in L^{2}(\mathbb{R})$ uniquely determine $\psi$. The answer is in
general negative: consider the correlated Gaussian
\begin{equation}
\psi(x)=\left(  \tfrac{1}{2\pi\sigma_{xx}}\right)  ^{1/4}e^{-\frac{x^{2}%
}{4\sigma_{xx}}}e^{\frac{i\sigma_{xp}}{2\hbar\sigma_{xx}}x^{2}} \label{Gauss1}%
\end{equation}
which has Fourier transform%
\begin{equation}
\widehat{\psi}(p)=\left(  \tfrac{1}{2\pi\sigma_{pp}}\right)  ^{1/4}%
e^{-\frac{p^{2}}{4\sigma_{pp}}}e^{-\frac{i\sigma_{xp}}{2\hbar\sigma_{pp}}%
p^{2}} \label{FGauss1}%
\end{equation}
and thus
\begin{equation}
|\psi(x)|^{2}=\left(  \tfrac{1}{2\pi\sigma_{xx}}\right)  ^{1/2}e^{-\frac
{x^{2}}{2\sigma_{xx}}}\text{ \ },\text{ \ }|\widehat{\psi}(p)|^{2}=\left(
\tfrac{1}{2\pi\sigma_{xx}}\right)  ^{1/2}e^{-\frac{p^{2}}{2\sigma_{pp}}}
\label{marg1}%
\end{equation}
Since we have
\begin{equation}
\sigma_{xx}\sigma_{pp}-\sigma_{xp}^{2}=\tfrac{1}{4}\hbar^{2} \label{RSUP1}%
\end{equation}
the covariance $\sigma_{xp}$ can take any of the two values $\pm(\sigma
_{xx}\sigma_{pp}-\tfrac{1}{4}\hbar^{2})^{1/2}$ so the Pauli problem has two
possible solutions $\psi^{\pm}$ (\textquotedblleft Pauli
partners\textquotedblright:\ see Corbett's \cite{Corbett} review of Pauli's
problem). The same argument works for multivariate Gaussians (\ref{psixy}):
setting$X=\frac{\hbar}{2}\Sigma_{XX}^{-1}$ and $Y=-\frac{2}{\hbar}\Sigma
_{XP}\Sigma_{XX}^{-1}$ we have%
\begin{equation}
\psi_{X,Y}(x)=\left(  \tfrac{1}{2\pi}\right)  ^{n/4}(\det\Sigma_{XX}%
)^{-1/4}\exp\left[  -\left(  \frac{1}{4}\Sigma_{XX}^{-1}+\frac{i}{2\hbar
}\Sigma_{XP}\Sigma_{XX}^{-1}\right)  x\cdot x\right]  \label{Gauss3}%
\end{equation}
from which one infers that
\begin{align}
|\psi_{X,Y}(x)|^{2}  &  =\left(  \tfrac{1}{2\pi}\right)  ^{n/2}(\det
\Sigma_{XX})^{-1/2}\exp\left(  -\frac{1}{2}\Sigma_{XX}^{-1}x\cdot x\right) \\
|\widehat{\psi}_{X,Y}(p)|^{2}  &  =\left(  \tfrac{1}{2\pi}\right)  ^{n/2}%
(\det\Sigma_{PP})^{-1/2}\exp\left(  -\frac{1}{2}\Sigma_{PP}^{-1}p\cdot
p\right)  .
\end{align}
To find the covariance matrix $\Sigma_{XP}$ one then uses the
Robertson--Schr\"{o}dinger formula (\ref{RS1})%
\[
\Sigma_{XX}\Sigma_{PP}-(\Sigma_{XP})^{2}=\frac{\hbar^{2}}{4}I_{n\times n}%
\]
which has multiple solutions in $\Sigma_{XP}$.

Another way of seeing Pauli's problem is to use the Wigner formalism; recall
that the Wigner transform of $\psi\in L^{2}(\mathbb{R}^{n})$ is defined by
\begin{equation}
W\psi(x,p)=\left(  \frac{1}{2\pi\hbar}\right)  ^{n}\int_{\mathbb{R}^{n}%
}e^{-\frac{i}{\hbar}p\cdot y}\psi(x+\tfrac{1}{2}y)\psi^{\ast}(x-\tfrac{1}%
{2}y)dy \label{Wigner}%
\end{equation}
and if $\psi,\widehat{\psi}\in L^{1}(\mathbb{R}^{n})\cap L^{2}(\mathbb{R}%
^{n})$ then the marginal properties%
\begin{equation}
\int_{\mathbb{R}^{n}}W\psi(x,p)dp=|\psi(x)|^{2}\text{ },\text{ }%
\int_{\mathbb{R}^{n}}W\psi(x,p)dx=|\widehat{\psi}(p)|^{2} \label{wigmarg}%
\end{equation}
hold \cite{WIGNER}. Since $W\psi^{\ast}(x,p)=W\psi(x,-p)$ the functions $\psi$
and $\psi^{\ast}$ have the same marginals, which is reflected by the relations
(\ref{marg1}) which are satisfied by both $\psi$ and its complex conjugate.
This idea extends to more general phase retrieval problems, see \cite{Grohs}.

\subsection{Lagrangian frames}

We will henceforth call \textit{Lagrangian frame} a pair $(\ell,\ell^{\prime
})$ of transverse Lagrangian subspaces (see APPENDIX\ B), that is $(\ell
,\ell^{\prime})\in\operatorname*{Lag}(n)\times\operatorname*{Lag}(n)$ and
$\ell\cap\ell^{\prime}=0$ (equivalently $\ell\oplus\ell^{\prime}%
=\mathbb{R}^{2n}$ since $\dim\ell=\dim\ell^{\prime}=n$). When $\ell$ and
$\ell^{\prime}$ are coordinate Lagrangian planes $\ell_{X}$ and $\ell_{P}$ we
will call it the \textit{canonical Lagrangian frame}. We denote by
$\mathcal{F}_{\operatorname*{Lag}}(n)$ The following property will be
essential for our constructions to come (See \cite{Birk}):

\begin{lemma}
\label{LemmaTrans}The symplectic group $\operatorname*{Sp}(n)$ acts
transitively on $\mathcal{F}_{\operatorname*{Lag}}(n)$. In particular, every
Lagrangian frame $(\ell,\ell^{\prime})\in\mathcal{F}_{\operatorname*{Lag}}(n)$
can be obtained from the canonical frame $(\ell_{X},\ell_{P})$ by a symplectic transformation.
\end{lemma}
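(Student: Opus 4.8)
The plan is to prove that $\operatorname{Sp}(n)$ acts transitively on the set $\mathcal{F}_{\operatorname{Lag}}(n)$ of Lagrangian frames by reducing an arbitrary frame to the canonical frame $(\ell_X,\ell_P)$. The key structural fact I will exploit is that a transverse pair of Lagrangian subspaces carries, by nondegeneracy of $\omega$, a canonical pairing that lets one build adapted bases, and that symplectic maps are exactly the linear maps preserving $\omega$. Since the group axioms make transitivity equivalent to the statement that every frame can be mapped \emph{to} one fixed frame, it suffices to send an arbitrary $(\ell,\ell')$ to $(\ell_X,\ell_P)$.

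First I would choose a basis $e_1,\dots,e_n$ of $\ell$. Because $\ell$ and $\ell'$ are transverse and $\omega$ is nondegenerate, the bilinear map $(u,v)\mapsto\omega(u,v)$ for $u\in\ell$, $v\in\ell'$ is a perfect pairing between $\ell$ and $\ell'$; hence there is a dual basis $f_1,\dots,f_n$ of $\ell'$ with $\omega(e_i,f_j)=\delta_{ij}$. Second, using that $\ell$ is Lagrangian I get $\omega(e_i,e_j)=0$, and likewise $\omega(f_i,f_j)=0$ since $\ell'$ is Lagrangian. These three families of relations say precisely that $(e_1,\dots,e_n,f_1,\dots,f_n)$ is a \emph{symplectic basis} of $(\mathbb{R}^{2n},\omega)$, with the same bracket relations as the canonical basis underlying $\ell_X=\mathbb{R}^n_x\times 0$ and $\ell_P=0\times\mathbb{R}^n_p$. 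Third, I define $S$ to be the linear map sending the canonical symplectic basis to $(e_i,f_j)$; since it matches a symplectic basis to a symplectic basis it preserves $\omega$, so $S\in\operatorname{Sp}(n)$, and by construction $S(\ell_X)=\ell$, $S(\ell_P)=\ell'$. Then $S^{-1}$ carries $(\ell,\ell')$ to the canonical frame, and composing such maps yields transitivity.

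The main obstacle, and the point deserving care, is establishing that $\omega$ restricts to a \emph{nondegenerate} pairing between transverse Lagrangian subspaces so that the dual basis $f_j$ genuinely exists in $\ell'$: one must verify that if $u\in\ell$ satisfies $\omega(u,v)=0$ for all $v\in\ell'$, then (using $\ell$ Lagrangian, so $\omega(u,\cdot)$ also annihilates $\ell$, together with $\ell\oplus\ell'=\mathbb{R}^{2n}$ and global nondegeneracy of $\omega$) one forces $u=0$. This transversality-plus-Lagrangian argument is the crux; once it is in place the construction of the symplectic basis and the resulting $S\in\operatorname{Sp}(n)$ is routine linear algebra. Since this is a standard fact, I would either cite \cite{Birk} as the excerpt does or give the short argument above; in either case the last sentence of the statement follows immediately, as $S$ is exactly the promised symplectic transformation taking $(\ell_X,\ell_P)$ to $(\ell,\ell')$.
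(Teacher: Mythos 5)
Your proposal is correct and takes essentially the same approach as the paper: both arguments construct a symplectic basis adapted to the Lagrangian frame and define $S\in\operatorname{Sp}(n)$ by matching one symplectic basis to another, the paper doing this directly between two arbitrary frames while you normalize to the canonical frame $(\ell_X,\ell_P)$, which is equivalent. Your write-up is in fact slightly more self-contained, since you justify the existence of the dual basis $f_1,\dots,f_n$ via nondegeneracy of the pairing $(u,v)\mapsto\omega(u,v)$ on $\ell\times\ell'$, a step the paper leaves implicit in the phrase ``choose bases whose union is a symplectic basis.''
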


\begin{proof}
Let $(\ell_{1},\ell_{1}^{\prime})$ and $(\ell_{2},\ell_{2}^{\prime})$ be two
Lagrangian frames. Choose a basis $(e_{1i})_{1\leq1\leq n}$ of $\ell_{1}$ and
a basis $(f_{1j})_{1\leq j\leq n}$ of $\ell_{1}^{\prime}$ whose union
$(e_{1i})_{1\leq1\leq n}\cup(f_{1j})_{1\leq j\leq n}$ is a symplectic basis,
that is $\omega(e_{1i},e_{1j})=\omega(f_{1i},f_{1j})=0$ and .$\omega
(f_{1i},e_{1j})=\delta_{ij}$ for $1\leq i,j\leq n$. Similarly, choose bases
$(e_{2i})_{1\leq1\leq n}$ and $(f_{2j})_{1\leq j\leq n}$ of $\ell_{2}$ and
$\ell_{2}^{\prime}$ whose union is also a symplectic basis. The linear
automorphism of $\mathbb{R}^{2n}$ defined by $S(e_{1i})=e_{2i}$ and
$S(f_{1i})=f_{2i}$ for $1\leq i\leq n$ is in $\operatorname*{Sp}(n)$ and we
have $(\ell_{2},\ell_{2}^{\prime})=(S\ell_{1},S\ell_{1}^{\prime})$.
\end{proof}

\subsection{Gaussian reconstruction by partial tomography}

We are going to generalize the reconstruction procedure for Gaussians using
the notion of Lagrangian frame introduced above. For this we need to define
the integral of a real integrable function $\rho$ on phase space along an
affine subspace $\ell(z)=\ell+z$ where $\ell$ is a Lagrangian subspace of
$(\mathbb{R}^{2n},\omega)$. Assuming \cite{ICP,Birk} that $\ell$ is
represented by the system of equations $Ax+Bp=0$ with $A^{T}B$ (and $B^{T}A$)
symmetric, and $A^{T}A+B^{T}B=I_{n\times n}$ (and $A^{T}A+B^{T}B=I_{n\times
n}$) we parametrize $\ell(z)$ by $z(u)=(x(u),p(u))$, $u\in\mathbb{R}^{n}$,
with
\begin{equation}
x(u)=-B^{T}u+x\text{ \ },\text{ \ }p(u)=A^{T}u+p \label{parameter1}%
\end{equation}
where $z=(x,p)\in\mathbb{R}^{2n}$ is arbitrary. Taking into account the fact
that $A^{T}A+B^{T}B=I_{n\times n}$ we then define the generalized line
integral
\begin{equation}
\int_{\ell(z)}\rho(s)ds=\int_{\mathbb{R}^{n}}\rho(-B^{T}u+x,A^{T}u+p)du.
\label{integral}%
\end{equation}
Applying this definition to the case where $\rho=W\psi$ with $\psi\in
L^{1}(\mathbb{R}^{n})\cap L^{2}(\mathbb{R}^{n})$ and choosing $\ell=\ell_{P}$
we have $B=0$ and $A=I_{n\times n}$ so that, in view of the marginal
properties (\ref{wigmarg}),%
\[
\int_{\ell_{P}(x,p)}W\psi(s)ds=\int_{\mathbb{R}^{n}}W\psi(x,u+p)du=|\psi
(x)|^{2}%
\]
and, similarly, choosing $\ell=\ell_{P}$, and $A=0$, $B=I_{n\times n}$,%
\[
\int_{\ell_{X}}W\psi(s)ds=\int_{\mathbb{R}^{n}}W\psi(-u+x,p)du=|\widehat{\psi
}(p)|^{2}.
\]
We'll generalize these relations below, but we first prove that for fixed
$z=(x,p)$ the integral (\ref{integral}) is independent of the choice of
parametrization (\ref{parameter1}). To see this, we note that the Lagrangian
subspace $\ell$ is the image of $\ell_{X}=\{(u,0):u\in\mathbb{R}^{n}\}$ by the
symplectic rotation (\cite{ICP,Birk}; Appendix A)%
\[
R=%
\begin{pmatrix}
-B^{T} & -A^{T}\\
A^{T} & -B^{T}%
\end{pmatrix}
\in\operatorname*{Sp}(n)\cap O(2n,\mathbb{R})
\]
since $A^{T}A+B^{T}B=I_{n\times n}$ and $B^{T}A$ symmetric. Let a new
parametrization of $\ell(z)$ be
\[
x^{\prime}(u)=-B^{\prime T}u+x\ ,\ \ p^{\prime}(u)=A^{\prime T}u+p
\]
the matrices $A^{\prime}$ and $B^{\prime}$ satisfying relations similar to
those of $A$ and $B$, and let $R^{\prime}$ be the corresponding symplectic
rotation. The product $R^{-1}R^{\prime}$ leaves $\ell_{X}$ invariant; since it
is a symplectic rotation we have $R^{-1}R^{\prime}=%
\begin{pmatrix}
H & 0\\
0 & H
\end{pmatrix}
$ with $H\in O(n,\mathbb{R})$ hence the re-parametrization is%
\[
x^{\prime}(u)=-B^{T}Hu+x\ ,\ p^{\prime}(u)=A^{T}Hu+p
\]
leading to the same value of the integral \ref{integral}) since $d(Hu)=du$.

The following result relates the integral (\ref{integral}) to the notion of
marginal value of the Wigner transform:

\begin{proposition}
\label{PropRadon}Let $\ell\in\operatorname*{Lag}(n)$ and $z=(x,p)$ be as
above. For $\psi\in L^{1}(\mathbb{R}^{n})\cap L^{2}(\mathbb{R}^{n})$) we have
\begin{equation}
\int_{\ell(z)}W\psi(s)ds=|\widehat{U}\psi(Ax+Bp)|^{2} \label{integralbis}%
\end{equation}
where $\widehat{U}\in\operatorname*{Mp}(n)$ covers the symplectic rotation $U=%
\begin{pmatrix}
A & B\\
-B & A
\end{pmatrix}
$.
\end{proposition}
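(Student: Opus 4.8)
The plan is to reduce the generalized line integral over the affine Lagrangian plane $\ell(z)$ to the ordinary momentum marginal over the coordinate plane $\ell_{P}$, transporting everything by the symplectic rotation $U$ and absorbing the resulting change of coordinates into a metaplectic operator via the symplectic covariance of the Wigner transform.

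First I would check that $U=\begin{pmatrix} A & B \\ -B & A\end{pmatrix}$ really is a symplectic rotation, i.e. $U\in\operatorname*{Sp}(n)\cap O(2n,\mathbb{R})$. A direct computation of $U^{T}JU$ and of $U^{T}U$ shows that they reduce to $J$ and to the identity respectively \emph{precisely} because $A^{T}B$ is symmetric and $A^{T}A+B^{T}B=I_{n\times n}$; these are exactly the two hypotheses imposed on $\ell$. Hence a metaplectic operator $\widehat{U}$ covering $U$ exists. Next I would identify the plane intrinsically: since the $X$-block of $Uz$ equals $Ax+Bp$, the defining equation $Ax+Bp=0$ of $\ell$ is equivalent to $Uz\in\ell_{P}$, so that $\ell=U^{-1}\ell_{P}$ and therefore $\ell(z)=U^{-1}\ell_{P}(Uz)$, writing $\ell_{P}(w)=\ell_{P}+w$. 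One then verifies compatibility with the parametrization $(\ref{parameter1})$: writing $\ell=R\ell_{X}$ with $R$ the symplectic rotation displayed above, a short calculation gives $UR=-J$, so $U$ carries the parametrizing curve $z(u)=R(u,0)^{T}+z$ to $Uz(u)=(0,u)^{T}+Uz$, which sweeps out $\ell_{P}(Uz)$ with fixed $X$-coordinate $Ax+Bp$.

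The computation then proceeds in three moves. Starting from the definition $\int_{\ell(z)}W\psi(s)\,ds=\int_{\mathbb{R}^{n}}W\psi(z(u))\,du$, I would write $z(u)=U^{-1}(Uz(u))$ and apply symplectic covariance in the form $W\psi(U^{-1}w)=W(\widehat{U}\psi)(w)$ (the identity $W(\widehat{S}\psi)(w)=W\psi(S^{-1}w)$ with $S=U$). This turns the integrand into $W(\widehat{U}\psi)((Uz)_{X},\,u+(Uz)_{P})$; the unit-Jacobian translation $v=u+(Uz)_{P}$ leaves the $X$-coordinate untouched, and the marginal property $(\ref{wigmarg})$ collapses the $v$-integral to $|\widehat{U}\psi((Uz)_{X})|^{2}=|\widehat{U}\psi(Ax+Bp)|^{2}$, which is the asserted formula. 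As a sanity check, the extreme cases $U=I$ (giving $|\psi(x)|^{2}$) and $U=J$ (giving $|\widehat{\psi}(p)|^{2}$) recover the two marginal identities already noted before the statement.

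I expect the only genuine obstacle to be bookkeeping rather than substance: one must confirm that the intrinsic description $\ell=U^{-1}\ell_{P}$ is consistent with the concrete parametrization encoded by $R$, i.e. that $UR=-J$, and that the orthogonality of $U$ makes every intervening change of variable measure-preserving. Once $U$ is known to be a symplectic rotation and the covariance identity is applied in the correct direction, the remainder is just a translation together with an application of the momentum marginal.
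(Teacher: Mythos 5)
Your proof is correct and follows essentially the same route as the paper's: both identify $\ell$ as $U^{-1}\ell_{P}$, transport the parametrized integral by $U$ using the symplectic covariance $W\psi\circ U^{-1}=W(\widehat{U}\psi)$, and then collapse the $u$-integral via a unit-Jacobian translation and the first marginal property (\ref{wigmarg}). Your extra verifications (that $U\in\operatorname*{Sp}(n)\cap O(2n,\mathbb{R})$ and that $UR=-J$ reconciles the intrinsic description with the parametrization (\ref{parameter1})) are details the paper leaves implicit, but the argument is the same.
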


\begin{proof}
The Lagrangian subspace $\ell$ is the image of the momentum space $\ell_{P}$
by the symplectic rotation
\[
U^{T}=U^{-1}=%
\begin{pmatrix}
A^{T} & -B^{T}\\
B^{T} & A^{T}%
\end{pmatrix}
\]
hence, using the covariance relation $W\psi\circ U^{-1}=W(\widehat{U}\psi)$
\cite{Birk,WIGNER},
\begin{align*}
\int_{\ell(z)}W\psi(s)ds  &  =\int_{\mathbb{R}^{n}}W\psi(U^{-1}%
((0,u)+U(x,p)))du\\
&  =\int_{\mathbb{R}^{n}}W(\widehat{U}\psi)(0,u)+U(x,p))du\\
&  =\int_{\mathbb{R}^{n}}W(\widehat{U}\psi)(Ax+Bp,Bx-Ap+u)du\\
&  =\int_{\mathbb{R}^{n}}W(\widehat{U}\psi)(Ax+Bp,u)du
\end{align*}
hence (\ref{integralbis}) in view of the first marginal property
(\ref{wigmarg}).
\end{proof}

Formula (\ref{integralbis}) is closely related to the definition of the
symplectic Radon transform as given in our paper \cite{entropy}. The result
below shows that, however, we do not need the full power of the theory of
inverse Radon transform to reconstruct Gaussians:

\begin{theorem}
\label{ThmRadon}Let $(\ell,\ell^{\prime})\in\mathcal{F}_{\operatorname*{Lag}%
}(n)$ be a Lagrangian frame. The Wigner transform $W\psi_{X,Y}$ (and hence the
Gaussian $\psi_{X,Y}$ itself, up to a unimodular factor) is uniquely
determined by the knowledge of the integrals
\[
\int_{\ell(z)}W\psi_{X,Y}(s)ds\text{ \ and }\int_{\ell^{\prime}(z)}W\psi
_{X,Y}(s)ds
\]
for all $z\in\mathbb{R}^{2n}$.
\end{theorem}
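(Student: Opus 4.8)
The plan is to trivialize the geometry first and then treat the resulting recovery problem in the canonical frame. Since $\operatorname{Sp}(n)$ acts transitively on Lagrangian frames (Lemma \ref{LemmaTrans}), I would choose $S\in\operatorname{Sp}(n)$ with $(\ell,\ell')=(S\ell_X,S\ell_P)$ and lift it to $\widehat{S}\in\operatorname{Mp}(n)$. Because the Wigner transform is symplectically covariant, $W(\widehat{S}^{-1}\psi_{X,Y})(w)=W\psi_{X,Y}(Sw)$, so the line integral of $W\psi_{X,Y}$ along $\ell(z)=S\ell_X+z$ becomes, after the affine change of variables $s=Sw+z$ (whose Jacobian is constant along the plane), the line integral of $W(\widehat{S}^{-1}\psi_{X,Y})$ along $\ell_X$ through $S^{-1}z$, and similarly for $\ell'$ and $\ell_P$. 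By Proposition \ref{PropGauss} the transported state $\widehat{S}^{-1}\psi_{X,Y}$ is again a centered Gaussian $\psi_{X',Y'}$ up to a unimodular factor. Hence the whole problem reduces to the canonical frame: it suffices to show that a centered Gaussian is determined up to a unimodular factor by its two integrals along $\ell_X$ and $\ell_P$.

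Next I would identify these two integrals with honest marginals. By Proposition \ref{PropRadon} with $(A,B)=(I,0)$ and $(A,B)=(0,I)$, the integrals along $\ell_P$ and $\ell_X$ are exactly $|\psi_{X',Y'}(x)|^2$ and $|\widehat{\psi}_{X',Y'}(p)|^2$, viewed as functions on $\mathbb{R}^n$. From the explicit form (\ref{Gauss3}) the first marginal is proportional to $e^{-\frac{1}{\hbar}X'x\cdot x}$ and reads off $X'$, equivalently the block $\Sigma_{XX}=\tfrac{\hbar}{2}(X')^{-1}$; the second marginal is a centered Gaussian whose covariance is $\Sigma_{PP}$, via the projection formula (\ref{boundd}) of Proposition \ref{PropProj}. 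Thus the two integrals determine precisely the two diagonal blocks $\Sigma_{XX}$ and $\Sigma_{PP}$ of the covariance matrix, and nothing more is directly visible.

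It remains to recover the off-diagonal block $\Sigma_{XP}$, and this is where the real content lies. Since $\psi_{X',Y'}$ is a pure state, Proposition \ref{Thm1} supplies the Robertson--Schr\"odinger identities (\ref{RS1})--(\ref{RS2}); in particular $(\Sigma_{XP})^2=\Sigma_{XX}\Sigma_{PP}-\tfrac{\hbar^2}{4}I$ is already fixed by the data, while the commutation relations (\ref{RS2}) tie $\Sigma_{XP}$ to the now-known blocks $\Sigma_{XX}$ and $\Sigma_{PP}$. The plan is to use (\ref{RS2}) to argue that $\Sigma_{XP}$ is simultaneously diagonalizable with $\Sigma_{XX}$ and $\Sigma_{PP}$, so that the matrix square root collapses to a scalar square root in each common eigendirection; once the correct scalar branch is selected, $\Sigma_{XP}$ is recovered, and then $\Sigma$, the symplectic Gram matrix $G=\tfrac{\hbar}{2}\Sigma^{-1}$, and the Wigner function (\ref{wxy})--(\ref{gsym}) are all reconstructed, giving $\psi_{X,Y}$ up to a unimodular factor.

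I expect the sign (branch) selection to be the main obstacle, because it is exactly the obstruction behind the non-uniqueness of Pauli's original problem: the two canonical marginals are invariant under $\Sigma_{XP}\mapsto-\Sigma_{XP}$, the passage $\psi\mapsto\overline{\psi}$ to the Pauli partner, so they cannot by themselves orient the square root. The crux of the argument must therefore be to show that the transversality of $(\ell,\ell')$---which after the reduction is encoded in how $S$ mixes the $X$ and $P$ blocks---breaks this reflection symmetry and singles out one branch. Concretely, I would track the cross term that $S$ produces when transporting the second marginal back to the original frame and verify that it records $\operatorname{sign}\Sigma_{XP}$; demonstrating that this cross term is genuinely nonzero and correctly oriented for every transverse frame (rather than merely generically) is the delicate point on which the whole theorem turns.
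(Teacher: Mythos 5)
Your reduction is exactly the one in the paper's proof: transitivity of $\operatorname{Sp}(n)$ on Lagrangian frames (Lemma \ref{LemmaTrans}), metaplectic covariance of the Wigner transform, and the marginal identities (\ref{wigmarg}), which convert the two families of integrals into the pair $|\widehat{S}^{-1}\psi_{X,Y}(x)|^{2}$ and $|F\widehat{S}^{-1}\psi_{X,Y}(p)|^{2}$, i.e.\ formulas (\ref{spsi1})--(\ref{spsi2}). Where you diverge is in what happens next: the paper's proof stops there with the one-line assertion that ``these values allow the determination of $\widehat{S}^{-1}\psi_{X,Y}$'', whereas you actually attempt the recovery, correctly extracting $\Sigma_{XX}$ and $\Sigma_{PP}$ from the two marginals and $(\Sigma_{XP})^{2}$ from the Robertson--Schr\"odinger identity (\ref{RS1}), and you honestly flag the branch selection for $\Sigma_{XP}$ as the remaining point.

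That remaining point is a genuine gap, and it cannot be closed along the lines you propose. Transversality of $(\ell,\ell^{\prime})$ cannot break the reflection symmetry, for the simple reason that the canonical frame $(\ell_{X},\ell_{P})$ is itself a transverse Lagrangian frame, and for it the data is precisely the Pauli data $(|\psi|^{2},|\widehat{\psi}|^{2})$, which Section \ref{SecRec1} of the paper shows is blind to the sign of $\Sigma_{XP}$. More generally, for an arbitrary frame $(\ell,\ell^{\prime})=S(\ell_{X},\ell_{P})$ set $\psi^{\sharp}=\widehat{S}\,\overline{\widehat{S}^{-1}\psi_{X,Y}}$; this is again a centered Gaussian (Proposition \ref{PropGauss}), and since the two marginals of a centered Gaussian are even functions, $\psi^{\sharp}$ produces exactly the same integrals along $\ell(z)$ and $\ell^{\prime}(z)$ for every $z\in\mathbb{R}^{2n}$, while
\begin{equation*}
W\psi^{\sharp}=W\psi_{X,Y}\circ\left(  SCS^{-1}\right)  ,\qquad C(x,p)=(x,-p),
\end{equation*}
which differs from $W\psi_{X,Y}$ whenever the transported Gaussian $\widehat{S}^{-1}\psi_{X,Y}=\psi_{X^{\prime},Y^{\prime}}$ has $Y^{\prime}\neq0$. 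So the two-fold Pauli-partner ambiguity survives in every frame: there is no cross term recording the sign of $\Sigma_{XP}$, because conjugation performed in the transported frame is invisible to both marginals. What the data along one frame determines is $W\psi_{X,Y}$ only up to the reflection through $\ell$ along $\ell^{\prime}$. Note that your analysis thereby also locates the weak point of the paper's own argument: its concluding assertion is exactly the claim your obstruction rules out, so the uniqueness statement needs either to be weakened to ``determined up to the frame-relative Pauli partner'' or to be supplemented by data along a third Lagrangian plane transverse to the first two, which does fix the sign.
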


\begin{proof}
It is similar to that of Proposition \ref{PropRadon} above. In view of Lemma
\ref{LemmaTrans} the symplectic group acts transitively on $\mathcal{F}%
_{\operatorname*{Lag}}(n)$ so we can find $S\in\operatorname*{Sp}(n)$ such tat
$(\ell,\ell^{\prime})=S(\ell_{X},\ell_{P})$. Setting $S=%
\begin{pmatrix}
A & B\\
C & D
\end{pmatrix}
$ let $\widehat{S}\in\operatorname*{Mp}(n)$ be one of the two metaplectic
operators covering $S$. Let $\ell(Sz)=S\ell_{X}+Sz$ and $\ell^{\prime
}(Sz)=S\ell_{P}+Sz$; we have, using the covariance relation $W\psi_{X,Y}\circ
S=W(\widehat{S}^{-1}\psi_{X,Y})$,
\begin{align*}
\int_{\ell^{\prime}(Sz)}W\psi_{X,Y}(s)ds  &  =\int_{\mathbb{R}^{n}}W\psi
_{X,Y}\left[  S((0,u)+(x,p))\right]  du\\
&  =\int_{\mathbb{R}^{n}}W(\widehat{S}^{-1}\psi_{X,Y})(x,u+p)du.
\end{align*}
that is, in view of the first marginal formula (\ref{wigmarg}),
\begin{equation}
\int_{\ell^{\prime}(Sz)}W\psi_{X,Y}(s)ds=|\widehat{S}^{-1}\psi_{X,Y}(x)|^{2}.
\label{spsi1}%
\end{equation}
Similarly, using the second formula (\ref{wigmarg}),%
\begin{equation}
\int_{\ell(Sz)}W\psi_{X,Y}(s)ds=|F\widehat{S}^{-1}\psi_{X,Y}(p)|^{2}.
\label{spsi2}%
\end{equation}
These values allow the determination of $\widehat{S}^{-1}\psi_{X,Y}$ and,
hence, of $\psi_{X,Y}.$
\end{proof}

\subsection{Geometric Interpretation}

We now consider the following situation: performing a large number of
measurements on the coordinates $x_{1},...,x_{k},p_{k+1},...,p_{n}$ (with
$1\leq k<n)$ we identify this cloud of measurements with an ellipsoid
$X_{\ell}$ carried by the Lagrangian subspace $\ell$ with coordinates
$x_{1},...,x_{k},p_{k+1},...,p_{n}$ and centered at the origin. We now ask
whether to this ellipsoid we can in some way associate the covariance
ellipsoid $\Omega_{\Sigma}$ of some pure Gaussian state with covariance matrix
$\Sigma$. In view of the discussion above $\Omega_{\Sigma}$ has to be a
quantum blob, \textit{i.e.} the image of the phase space ball $B^{2n}%
(\sqrt{\hbar})$ by some $S\in\operatorname*{Sp}(n)$. The answer is given by
the following result, which actually holds for any Lagrangian subspace (not
necessarily a coordinate subspace). It introduces a notion of polar duality
between Lagrangian subspaces. (The basics of the notion of polar duality for
convex sets are shortly reviewed in Appendix C.)

\begin{theorem}
\label{ThmA}Let $(\ell,\ell^{\prime})\in\mathcal{F}_{\operatorname*{Lag}}(n)$
be a Lagrangian frame in $(\mathbb{R}^{2n},\omega)$. Let $X_{\ell}$ be a
centered ellipsoid carried by $\ell$ and define the dual ellipsoid $(X_{\ell
})_{\ell^{\prime}}^{\hbar}\subset\ell^{\prime}$ by
\begin{equation}
(X_{\ell})_{\ell^{\prime}}^{\hbar}=\{z^{\prime}\in\ell^{\prime}:\sup
\nolimits_{z\in\ell}\omega(z,z^{\prime})\leq\hbar\}. \label{omo3}%
\end{equation}
(i) The John ellipsoid $\Omega$ of the product $X_{\ell}\times(X_{\ell}%
)_{\ell^{\prime}}^{\hbar}$ is a quantum blob $S(B^{2n}(\sqrt{\hbar}))$; (ii)
To that quantum blob corresponds the Gaussian pure state $\psi_{X,Y}^{\pm}=\pm
i^{\mu}\widehat{S}\phi_{0}$ where $\widehat{S}\in\operatorname*{Mp}(n)$ covers
$S$.
\end{theorem}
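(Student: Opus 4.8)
The plan is to reduce to the canonical frame by transitivity (Lemma~\ref{LemmaTrans}) and then to exploit two equivariance properties. First, for any invertible linear map $T$ the maximal-volume inscribed (John) ellipsoid $\mathcal{J}(\cdot)$ satisfies $\mathcal{J}(TK)=T\mathcal{J}(K)$, because $T$ multiplies every volume by the same factor $|\det T|$ and carries ellipsoids to ellipsoids. Second, since $\omega$ is $\operatorname*{Sp}(n)$-invariant the symplectic dual (\ref{omo3}) is covariant. Concretely, I would pick $S_{0}\in\operatorname*{Sp}(n)$ with $S_{0}(\ell_{X},\ell_{P})=(\ell,\ell')$ and set $X=S_{0}^{-1}(X_{\ell})\subset\ell_{X}$. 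Writing $z'=S_{0}w'$ with $w'\in\ell_{P}$ and using $\omega(S_{0}w,S_{0}w')=\omega(w,w')$ gives $\sup\nolimits_{z\in X_{\ell}}\omega(z,z')=\sup\nolimits_{w\in X}\omega(w,w')$, hence $(X_{\ell})^{\hbar}_{\ell'}=S_{0}\big((X)^{\hbar}_{\ell_{P}}\big)$ and, under the identifications $\ell\oplus\ell'=\mathbb{R}^{2n}=\ell_{X}\oplus\ell_{P}$, $X_{\ell}\times(X_{\ell})^{\hbar}_{\ell'}=S_{0}\big(X\times(X)^{\hbar}_{\ell_{P}}\big)$. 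By equivariance $\Omega=S_{0}\,\mathcal{J}\big(X\times(X)^{\hbar}_{\ell_{P}}\big)$, so it suffices to treat the canonical frame.

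In the canonical frame the symplectic dual reduces to the ordinary $\hbar$-polar dual: for $z=(x,0)$ and $z'=(0,p)$ we have $\omega(z,z')=-x\cdot p$, and since $X$ is centrally symmetric $\sup\nolimits_{x\in X}(-x\cdot p)=\sup\nolimits_{x\in X}(x\cdot p)$, so $(X)^{\hbar}_{\ell_{P}}=X^{\hbar}$ in the sense of (\ref{omo2}). With $X=\{x:Ax\cdot x\le\hbar\}$, $A\in\operatorname*{Sym}_{++}(n,\mathbb{R})$, formula (\ref{Xelldual}) gives $X^{\hbar}=\{p:A^{-1}p\cdot p\le\hbar\}$. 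I would then introduce the symplectic scaling $S=\begin{pmatrix}A^{1/2}&0\\0&A^{-1/2}\end{pmatrix}\in\operatorname*{Sp}(n)$, which sends $(x,p)\mapsto(A^{1/2}x,A^{-1/2}p)$ and therefore maps $X\times X^{\hbar}$ onto the product of equal balls $B_{X}^{n}(\sqrt{\hbar})\times B_{P}^{n}(\sqrt{\hbar})$; by equivariance it remains to compute $\mathcal{J}\big(B_{X}^{n}(\sqrt{\hbar})\times B_{P}^{n}(\sqrt{\hbar})\big)$ and pull back by $S^{-1}$.

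This last computation is the main obstacle, and I would settle it by symmetry. The body $C=B_{X}^{n}(\sqrt{\hbar})\times B_{P}^{n}(\sqrt{\hbar})$ is invariant under the block-diagonal action of $O(n)\times O(n)$ and under the swap $(x,p)\mapsto(p,x)$, all of which lie in $O(2n)$ and hence preserve volume; by uniqueness of the John ellipsoid, $\mathcal{J}(C)$ is invariant under this group. A centered ellipsoid $\{Mz\cdot z\le1\}$ is invariant iff $M$ commutes with all these maps, which forces the off-diagonal blocks of $M$ to vanish and each diagonal block to be scalar, after which the swap forces the two scalars to coincide; thus $\mathcal{J}(C)$ is a ball $\{c|z|^{2}\le1\}$, and the largest ball contained in $C$ is $B^{2n}(\sqrt{\hbar})$ (its boundary meets $\partial C$ at the points $(\pm\sqrt{\hbar}\,e_{i},0)$ and $(0,\pm\sqrt{\hbar}\,e_{j})$). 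Pulling back gives $\mathcal{J}\big(X\times X^{\hbar}\big)=S^{-1}(B^{2n}(\sqrt{\hbar}))$, whence $\Omega=S_{0}S^{-1}(B^{2n}(\sqrt{\hbar}))$ with $S_{0}S^{-1}\in\operatorname*{Sp}(n)$: a quantum blob, which proves (i).

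For (ii) I would invoke the correspondence between quantum blobs and centered pure Gaussians recorded just before Proposition~\ref{Thm1}: writing $\Omega=\mathcal{S}(B^{2n}(\sqrt{\hbar}))$ with $\mathcal{S}=S_{0}S^{-1}\in\operatorname*{Sp}(n)$, that discussion identifies the unique (up to a unimodular factor) pure Gaussian state whose covariance ellipsoid is this quantum blob as $\widehat{\mathcal{S}}\phi_{0}$, where $\widehat{\mathcal{S}}\in\operatorname*{Mp}(n)$ covers $\mathcal{S}$; by Proposition~\ref{PropGauss} it lies in $\operatorname*{Gauss}\nolimits_{0}(n)$, i.e. it equals some $\psi_{X,Y}^{\pm}=\pm i^{\mu}\widehat{\mathcal{S}}\phi_{0}$. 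The only delicate point is the transpose in the relation $\Omega_{\Sigma}=\mathcal{S}^{T}(B^{2n}(\sqrt{\hbar}))$ between a covariance ellipsoid and its generating symplectic matrix, which merely replaces $\mathcal{S}$ by $\mathcal{S}^{T}$ and leaves the conclusion unchanged.
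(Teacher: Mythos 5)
Your proposal is correct and follows essentially the same route as the paper's proof: reduction to the canonical frame via the transitivity of Lemma \ref{LemmaTrans} together with the symplectic covariance of the dual (\ref{omo3}), reduction of the ellipsoid case to the product of balls $B_{X}^{n}(\sqrt{\hbar})\times B_{P}^{n}(\sqrt{\hbar})$ by a block-diagonal symplectic scaling, identification of the John ellipsoid of that product as $B^{2n}(\sqrt{\hbar})$ by a symmetry-plus-uniqueness argument, and then the standard quantum blob--Gaussian correspondence $S(B^{2n}(\sqrt{\hbar}))\leftrightarrow\widehat{S}\phi_{0}$ for part (ii). The differences are cosmetic: you perform the reductions in the opposite order, use the full $O(n)\times O(n)$ symmetry plus the swap where the paper uses the diagonal $O(n)$ action, the partial reflection and the swap, and you make explicit the equivariance $\mathcal{J}(TK)=T\mathcal{J}(K)$ and the transpose bookkeeping that the paper leaves implicit.
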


\begin{proof}
(i) We recall \cite{Ball} that the John ellipsoid of a convex set is the
(unique) ellipsoid of maximal volume contained in that set. Let us first prove
(i) for the particular case $(\ell,\ell^{\prime})=(\ell_{X},\ell_{P})$ and
$X_{\ell_{X}}=B_{X}^{n}(\sqrt{\hbar})$. In this case $(X_{\ell_{X}})_{\ell
_{P}}^{\hbar}=B_{P}^{n}(\sqrt{\hbar})$. We claim that the John ellipsoid of
$B_{X}^{n}(\sqrt{\hbar})\times B_{P}^{n}(\sqrt{\hbar})$ is the trivial quantum
blob $B^{2n}(\sqrt{\hbar})$. To see this we first note that the inclusion
\begin{equation}
B^{2n}(\sqrt{\hbar})\subset B_{X}^{n}(\sqrt{\hbar})\times B_{P}^{n}%
(\sqrt{\hbar}) \label{incl}%
\end{equation}
is obvious, and that we cannot have
\[
B^{2n}(R)\subset B_{X}^{n}(\sqrt{\hbar})\times B_{P}^{n}(\sqrt{\hbar})
\]
if $R>\sqrt{\hbar}$. Assume now that the John ellipsoid of $B_{X}^{n}%
(\sqrt{\hbar})\times B_{P}^{n}(\sqrt{\hbar})$ is defined by the inequality
\[
Ax\cdot x+Bx\cdot p+Cp\cdot p\leq\hbar
\]
where $A,C>0$ and $B$ are real $n\times n$ matrices. Since $B_{X}^{n}%
(\sqrt{\hbar})\times B_{P}^{n}(\sqrt{\hbar})$ is invariant by the
transformation $(x,p)\longmapsto(p,x)$ so is its John ellipsoid and we must
thus have $A=C$ and $B=B^{T}$. Similarly, $B_{X}^{n}(\sqrt{\hbar})\times
B_{P}^{n}(\sqrt{\hbar})$ being invariant by the partial reflection
$(x,p)\longmapsto(-x,p)$ we get $B=0$ so the John ellipsoid of $B_{X}%
^{n}(\sqrt{\hbar})\times B_{P}^{n}(\sqrt{\hbar})$ is defined by $Ax\cdot
x+Ap\cdot p\leq\hbar$. We next observe that $B_{X}^{n}(\sqrt{\hbar})\times
B_{P}^{n}(\sqrt{\hbar})$ an its John ellipsoid are invariant under the
transformations $(x,p)\longmapsto(Hx,HP)$ where $H\in O(n,\mathbb{R})$ so we
must have $AH=HA$ for all $H\in O(n,\mathbb{R})$, but this is only possible if
$A=\lambda I_{n\times n}$ for some $\lambda\in\mathbb{R}$. The John ellipsoid
of $B_{X}^{n}(\sqrt{\hbar})\times B_{P}^{n}(\sqrt{\hbar})$ is thus of the type
$B^{2n}(\sqrt{\hbar}/\sqrt{\lambda})$ for some $\lambda\geq1$ and this
concludes the proof in view of the inclusion (\ref{incl}) since the case
$\lambda>R^{2}$ is excluded. \ Consider now, again when $(\ell,\ell^{\prime
})=(\ell_{X},\ell_{P})$, the case where $X_{\ell_{X}}=A(B_{X}^{n}(\sqrt{\hbar
}))$ is an ellipsoid ($A\in\operatorname*{Sym}_{++}(n,\mathbb{R})$). We then
have $(X_{\ell_{X}})_{\ell_{P}}^{\hbar}=A^{-1}(B_{P}^{n}(\sqrt{\hbar}))$ so
that
\[
X_{\ell_{X}}\times(X_{\ell_{X}})_{\ell_{P}}^{\hbar}=S_{A}(B_{X}^{n}%
(\sqrt{\hbar})\times B_{P}^{n}(\sqrt{\hbar}))
\]
where $S_{A}=%
\begin{pmatrix}
A & 0\\
0 & A^{-1}%
\end{pmatrix}
\in\operatorname*{Sp}(n)$ so that the John ellipsoid of $X_{\ell_{X}}%
\times(X_{\ell_{X}})_{\ell_{P}}^{\hbar}$ is here the quantum blob
$S_{A}(B^{2n}(\sqrt{\hbar}))$. Let us finally consider the case of an
arbitrary Lagrangian frame $(\ell,\ell^{\prime})$. In view of the transitivity
of the action of $\operatorname*{Sp}(n)$ on $\mathcal{F}_{\operatorname*{Lag}%
}(n)$ we can find $S\in\operatorname*{Sp}(n)$ such that $(\ell,\ell^{\prime
})=S(\ell_{X},\ell_{P})$; set now $X_{\ell_{X}}=S^{-1}(X_{\ell})$, this is an
ellipsoid carried by $(\ell_{X}$. We claim that we have $(X_{\ell_{X}}%
)_{\ell_{P}}^{\hbar}=S^{-1}((X_{\ell})_{\ell^{\prime}}^{\hbar})$. Let $z\in
S^{-1}(X_{\ell})_{\ell^{\prime}}^{\hbar}$, that is $Sz\in(X_{\ell}%
)_{\ell^{\prime}}^{\hbar}$. This is equivalent to the conditions $z\in
S^{-1}\ell^{\prime}=\ell_{P}$ and $\omega(Sz,z^{\prime})\leq\hbar$ for all
$z^{\prime}\in X_{\ell}$. Since $\omega(Sz,z^{\prime})=\omega(z,S^{-1}%
z^{\prime})$ this is in turn equivalent to $z\in S^{-1}\ell^{\prime}=\ell_{P}$
and $\omega(z,S^{-1}z^{\prime})\leq\hbar$ for all $S^{-1}z^{\prime}\in
S^{-1}(X_{\ell})$, that is to $z\in\omega(z,z^{\prime\prime})\leq\hbar$ for
all $z^{\prime\prime}\in X_{\ell_{X}}$ which is the same thing as
$z\in(X_{\ell})_{\ell^{\prime}}^{\hbar}$ which was to be proven. Summarizing,
we have shown that
\[
X_{\ell}\times(X_{\ell})_{\ell^{\prime}}^{\hbar}=S(X_{\ell_{X}}\times
(X_{\ell_{X}})_{\ell_{P}}^{\hbar}).
\]
Since the John ellipsoid of $X_{\ell_{X}}\times(X_{\ell_{X}})_{\ell_{P}%
}^{\hbar}$ is $S_{A}(B^{2n}(\sqrt{\hbar}))$, that of $X_{\ell}\times(X_{\ell
})_{\ell^{\prime}}^{\hbar}$ is the quantum blob $Q=SS_{A}(B^{2n}(\sqrt{\hbar
}))$. (ii) (\textit{cf}. \cite{blob}). The result follows from the discussion
above. Set $S^{\prime}=SS_{A}$. To $Q=S^{\prime}(B^{2n}(\sqrt{\hbar}))$ we
associate the positive definite symplectic matrix $G=S^{\prime}S^{\prime T}.$
To the latter corresponds the phase space Gaussian $z\longmapsto(\pi
\hbar)^{-n}e^{-\frac{1}{\hbar}Gz\cdot z}$, which is the Wigner transform of
$i^{\mu}\widehat{S^{\prime}}\phi_{0}$ for any $\mu\in\mathbb{R}$.
\end{proof}

\section{Polar Duality and Covariance Ellipsoid}

\subsection{Symplectic polar duality}

Here is another variant of polar duality; it was introduced in our paper
\cite{Bull} and used in \cite{sym} to characterize covariance ellipsoids (we
are following quite closely the presentation in the latter paper). Let
$\Omega$ be a symmetric convex body in the phase space $(\mathbb{R}%
^{2n},\omega)$ (for instance an ellipsoid). We call \textit{symplectic
}($\hbar$-)\textit{polar dual} $\Omega^{\hbar,\omega}$ of $\Omega$ the set
\begin{equation}
\Omega^{\hbar,\omega}=\{z^{\prime}\in\mathbb{R}^{2n}:\sup\nolimits_{z\in
\Omega}\omega(z,z^{\prime})\leq\hbar\}. \label{omegapol1bis}%
\end{equation}
It is related to the usual polar dual
\[
\Omega^{\hbar}=\{z^{\prime}\in\mathbb{R}^{2n}:\sup\nolimits_{z\in\Omega
}(z\cdot z^{\prime})\leq\hbar\}
\]
by a symplectic rotation:
\begin{equation}
\Omega^{\hbar,\omega}=(J\Omega)^{\hbar}=J(\Omega^{\hbar}). \label{omegapol2}%
\end{equation}
The symplectic polar dual is symplectically covariant in the sense that
\begin{equation}
(S\Omega)^{\hbar,\omega}=S(\Omega^{\hbar,\omega})\text{ \textit{for} }%
S\in\operatorname*{Sp}(n). \label{omegapol3}%
\end{equation}
In fact, the condition $S\in\operatorname*{Sp}(n)$ is equivalent to
$S^{T}JS=J$ hence $JS=(S^{T})^{-1}J$. It follows that
\begin{align*}
(S\Omega)^{\hbar,\omega}  &  =J(S(\Omega))^{\hbar}=J(S^{T})^{-1}(\Omega
^{\hbar})\\
&  =SJ(\Omega^{\hbar})=S(\Omega^{\hbar,\omega})
\end{align*}
which is (\ref{omegapol3}). In particular, since $B^{2n}(\sqrt{\hbar})^{\hbar
}=B^{2n}(\sqrt{\hbar})$, we have
\begin{equation}
(S(B^{2n}(\sqrt{\hbar})))^{\hbar,\omega}=S(B^{2n}(\sqrt{\hbar}))
\label{fixblob}%
\end{equation}
so\textit{ }the quantum blobs $S(B^{2n}(\sqrt{\hbar}))$, $S\in
\operatorname*{Sp}(n)$, are fixed points of the transformation $\Omega
\longmapsto\Omega^{\hbar,\omega}$; it is easy to show that they are the only
fixed points of this transformation.

\begin{proposition}
\label{Prop3}Let $\Omega_{\Sigma}$ be the covariance ellipsoid associated with
the covariance matrix $\Sigma$:
\begin{equation}
\Omega_{\Sigma}=\{z\in\mathbb{R}^{2n}:Mz\cdot z\leq\hbar\}\text{ \ },\text{
\ }M=\tfrac{1}{2}\hbar\Sigma^{-1}.
\end{equation}
(i) $\Omega_{\Sigma}$ is quantized (that is, contains a quantum blob) if and
only if we have the inclusion $\Omega_{\Sigma}^{\hbar,\omega}\subset
\Omega_{\Sigma}$. (ii) The equality $\Omega_{\Sigma}^{\hbar,\omega}%
=\Omega_{\Sigma}$ holds if and only if there exists $S\in\operatorname*{Sp}%
(n)$ such that $\Omega_{\Sigma}=S(B^{2n}(\sqrt{\hbar}))$ (i.e. if and only if
$\Omega_{\Sigma}$ is a quantum blob).
\end{proposition}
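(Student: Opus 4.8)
The plan is to make the symplectic polar dual $\Omega_{\Sigma}^{\hbar,\omega}$ explicit as an ellipsoid, so that both the inclusion in (i) and the equality in (ii) become assertions about the L\"{o}wner ordering of two positive definite matrices; this ordering will then reduce to the symplectic eigenvalue bound $\lambda_{j}^{\omega}\geq\frac{1}{2}\hbar$, which by the Claims following (\ref{quant0}) is exactly the quantization condition.

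First I would compute the dual. Writing $\Omega_{\Sigma}=\{Mz\cdot z\leq\hbar\}$ with $M=\frac{1}{2}\hbar\Sigma^{-1}$, the ellipsoid polar-dual formula used in (\ref{Xell})--(\ref{Xelldual}) gives $\Omega_{\Sigma}^{\hbar}=\{M^{-1}z\cdot z\leq\hbar\}$, and then the relation (\ref{omegapol2}), namely $\Omega^{\hbar,\omega}=J(\Omega^{\hbar})$, yields
\[
\Omega_{\Sigma}^{\hbar,\omega}=\{Nz\cdot z\leq\hbar\},\qquad N=J^{T}M^{-1}J=-\tfrac{2}{\hbar}J\Sigma J,
\]
the last equality because $J^{T}=-J$ and $M^{-1}=\frac{2}{\hbar}\Sigma$. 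Next I would invoke the elementary fact that for centered ellipsoids $E_{A}=\{Az\cdot z\leq\hbar\}$ and $E_{B}=\{Bz\cdot z\leq\hbar\}$ (with $A,B>0$) one has $E_{B}\subseteq E_{A}$ if and only if $A\leq B$. Applied to $A=M$, $B=N$, this turns the inclusion $\Omega_{\Sigma}^{\hbar,\omega}\subseteq\Omega_{\Sigma}$ into the matrix inequality $N\geq M$.

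To evaluate $N\geq M$ I would conjugate by $\Sigma^{1/2}$, which preserves the L\"{o}wner order. Since $\Sigma^{1/2}J\Sigma J\Sigma^{1/2}=(\Sigma^{1/2}J\Sigma^{1/2})^{2}=K^{2}$ with $K:=\Sigma^{1/2}J\Sigma^{1/2}$ antisymmetric, one gets
\[
\Sigma^{1/2}N\Sigma^{1/2}=-\tfrac{2}{\hbar}K^{2},\qquad\Sigma^{1/2}M\Sigma^{1/2}=\tfrac{\hbar}{2}I,
\]
so that $N\geq M$ is equivalent to $-K^{2}\geq\frac{\hbar^{2}}{4}I$. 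As recalled in the text, the eigenvalues of $K$ are $\pm i\lambda_{j}^{\omega}$, hence those of $-K^{2}$ are $(\lambda_{j}^{\omega})^{2}$, and the inequality holds precisely when $\lambda_{j}^{\omega}\geq\frac{1}{2}\hbar$ for every $j$. By the first Claim this is exactly the statement that $\Omega_{\Sigma}$ contains a quantum blob, which proves (i). Alternatively, one may reach the same point by diagonalizing $\Sigma$ via Williamson (\ref{Williamson}) and using the symplectic covariance (\ref{omegapol3}) to reduce to the diagonal ellipsoid, on each symplectic plane of which $\Omega_{D}$ and $\Omega_{D}^{\hbar,\omega}$ are concentric disks whose radii compare as $\sqrt{2\lambda_{j}^{\omega}}$ versus $\hbar/\sqrt{2\lambda_{j}^{\omega}}$.

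For (ii), equality $\Omega_{\Sigma}^{\hbar,\omega}=\Omega_{\Sigma}$ is the matrix equality $N=M$, i.e. $-K^{2}=\frac{\hbar^{2}}{4}I$, i.e. $\lambda_{j}^{\omega}=\frac{1}{2}\hbar$ for all $j$; Williamson's factorization (\ref{Williamson}) then gives $\Sigma=\frac{\hbar}{2}S^{T}S$, whence $M=(S^{T}S)^{-1}$ and $\Omega_{\Sigma}=S^{T}(B^{2n}(\sqrt{\hbar}))$ is a quantum blob. The converse is immediate from (\ref{fixblob}), which says that quantum blobs are fixed points of $\Omega\mapsto\Omega^{\hbar,\omega}$. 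The only genuinely delicate step in all of this is the explicit computation of $N$: the bookkeeping of the factors of $J$ and their signs is what produces the order-reversing effect $\lambda_{j}^{\omega}\mapsto(\hbar/2)^{2}/\lambda_{j}^{\omega}$ that drives the entire statement; once $N$ is in hand the rest is the ellipsoid-inclusion dictionary together with the already established equivalence between the symplectic-eigenvalue bound and the quantum-blob condition.
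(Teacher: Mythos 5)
Your proposal is correct, and its technical core takes a genuinely different route from the paper's. Both proofs begin the same way: compute $\Omega_{\Sigma}^{\hbar,\omega}$ explicitly via (\ref{omegapol2}) and translate the inclusion into the L\"{o}wner inequality $M\leq-JM^{-1}J$. They then diverge. The paper performs a Williamson diagonalization (\ref{Williamson}) of $M$, deduces $D\leq I$ and hence $M\leq S^{T}S$, which \emph{directly exhibits} a quantum blob inside $\Omega_{\Sigma}$; for the necessity direction of (i) and for (ii) it uses the anti-monotonicity of polar duality together with the fixed-point property (\ref{fixblob}) of quantum blobs, a purely geometric argument requiring no computation. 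You instead conjugate by $\Sigma^{1/2}$, reducing everything to the single spectral inequality $-K^{2}\geq\tfrac{\hbar^{2}}{4}I$ with $K=\Sigma^{1/2}J\Sigma^{1/2}$, i.e. $\lambda_{j}^{\omega}\geq\tfrac{\hbar}{2}$, and then invoke the equivalence between this eigenvalue bound and the quantum-blob condition (the Claims following (\ref{quant0})). Your route buys a compact linear-algebra argument that handles both implications of (i) simultaneously, makes the order-reversing effect $\lambda_{j}^{\omega}\mapsto(\hbar/2)^{2}/\lambda_{j}^{\omega}$ transparent, and in (ii) yields the blob explicitly as $\Omega_{\Sigma}=S^{T}(B^{2n}(\sqrt{\hbar}))$ (correct, since $S^{T}\in\operatorname*{Sp}(n)$). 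The cost is that it outsources the step ``eigenvalue bound $\Leftrightarrow$ contains a quantum blob'' to the background Claims, which the paper states with references but does not reprove, whereas the paper's own proof is self-contained on that point; moreover the paper's anti-monotonicity/fixed-point argument for necessity and for (ii) extends verbatim to symmetric convex bodies that are not ellipsoids, while your spectral computation is intrinsically tied to quadratic forms. All your individual steps check out: the formula $N=-\tfrac{2}{\hbar}J\Sigma J$, the ellipsoid-inclusion dictionary, the congruence invariance of the L\"{o}wner order, and the identification of the spectrum of $-K^{2}$ with the squares of the symplectic eigenvalues of $\Sigma$.
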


\begin{proof}
(i) Suppose that there exists $S\in\operatorname*{Sp}(n)$ such that
$Q=S(B^{2n}(\sqrt{\hbar}))\subset\Omega$. By the anti-monotonicity of
(symplectic) polar duality this implies that we have $\Omega^{\hbar,\omega
}\subset Q^{\hbar,\omega}=Q\subset\Omega$, which proves the necessity of the
condition. Suppose conversely that we have $\Omega^{\hbar,\omega}\subset
\Omega$. Since%
\begin{equation}
\Omega_{\Sigma}^{\hbar}=\{z\in\mathbb{R}^{2n}:M^{-1}z\cdot z\leq\hbar\}
\end{equation}
we have, using (\ref{omegapol2}),
\begin{equation}
\Omega_{\Sigma}^{\hbar,\omega}=\{z\in\mathbb{R}^{2n}:(-JM^{-1}J)z\cdot
z\leq\hbar\} \label{ommom}%
\end{equation}
hence the inclusion $\Omega_{\Sigma}^{\hbar,\omega}\subset\Omega_{\Sigma}$ is
equivalent to $M\leq(-JM^{-1}J)$ ($\leq$ stands here for the L\"{o}wner
ordering). Performing a symplectic diagonalization (\ref{Williamson}) of $M$
and using the relations $JS^{-1}=S^{T}J$, $(S^{T})^{-1}J=JS$ this is
equivalent to
\[
M=S^{T}DS\leq S^{T}(-JD^{-1}J)S
\]
that is to $D\leq-JD^{-1}J$. In the notation in (\ref{Williamson}) this
implies that we have $\Lambda^{\omega}\leq(\Lambda^{\omega})^{-1}$ and hence
$\lambda_{j}^{\omega}\leq1$ for $1\leq j\leq n$; thus $D\leq I$ and
$M=S^{T}DS\leq S^{T}S$. The inclusion $S(B^{2n}(\sqrt{\hbar}))\subset\Omega$
follows. Let us next prove the statement (ii) The condition is sufficient
since $S(B^{2n}(\sqrt{\hbar}))^{\hbar,\omega}=S(B^{2n}(\sqrt{\hbar}))$. Assume
conversely that $\Omega_{\Sigma}^{\hbar,\omega}=\Omega_{\Sigma}$. Then there
exists $S\in\operatorname*{Sp}(n)$ such that$Q=S(B^{2n}(\sqrt{\hbar}%
))\subset\Omega_{\Sigma}$. It follows that $\Omega_{\Sigma}^{\hbar,\omega
}\subset Q^{\hbar,\omega}=Q$ hence $\Omega_{\Sigma}^{\hbar,\omega}%
=\Omega_{\Sigma}\subset Q$ so that $\Omega=Q$.
\end{proof}

Here is an example:

\begin{example}
\label{example2}Consider again the covariance ellipse
\begin{equation}
\Omega_{\Sigma}:\dfrac{\sigma_{pp}}{2D}x^{2}-\frac{\sigma_{xp}}{D}%
px+\dfrac{\sigma_{xx}}{2D}p^{2}\leq1
\end{equation}
with $D=\sigma_{xx}\sigma_{pp}-\sigma_{xp}^{2}=\tfrac{1}{4}\hbar^{2}$. Here
$M=\frac{\hbar}{2D}%
\begin{pmatrix}
\sigma_{pp} & -\sigma_{xp}\\
-\sigma_{xp} & \sigma_{xx}%
\end{pmatrix}
$ hence the symplectic polar dual of $\Omega_{\Sigma}$ is%
\[
\Omega_{\Sigma}^{\hbar,\omega}:\frac{2\sigma_{pp}}{\hbar^{2}}x^{2}%
-\frac{4\sigma_{xp}}{\hbar^{2}}px+\frac{2\sigma_{xx}}{\hbar^{2}}p^{2}\leq1.
\]
The condition $\Omega_{\Sigma}^{\hbar,\omega}=\Omega_{\Sigma}$ is equivalent
to $D=\frac{1}{4}\hbar^{2}$ so that $\Omega_{\Sigma}$ is indeed a quantum blob.
\end{example}

\subsection{A tomographic result}

We are going to prove a stronger statement, which can be seen as a
\textquotedblleft tomographic\textquotedblright\ result since it involves the
intersection of the covariance ellipsoid with a (Lagrangian) subspace. Let us
begin with a simple example in the case $n=1$.

\begin{theorem}
\label{ThmB}(i) The ellipsoid $\Omega$ contains a quantum blob $Q=S(B^{2n}%
(\sqrt{\hbar}))$ ($S\in\operatorname*{Sp}(n)$) if and only if there exists
$\ell\in\operatorname*{Lag}(n)$ such that%
\begin{equation}
\Omega^{\hbar,\omega}\cap\ell\subset\Omega\cap\ell\label{interl}%
\end{equation}
in which case we have $\Omega^{\hbar,\omega}\cap\ell\subset\Omega\cap\ell$ for
all $\ell\in\operatorname*{Lag}(n)$. (ii) The equality $\Omega^{\hbar,\omega
}\cap\ell=\Omega\cap\ell$ holds \ (for some, and hence for all, $\ell
\in\operatorname*{Lag}(n)$) if and only if $\Omega$ is a quantum blob.
\end{theorem}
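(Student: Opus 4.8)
I need to prove Theorem \ref{ThmB}, which characterizes when a phase-space ellipsoid $\Omega$ contains a quantum blob, in terms of the intersection condition $\Omega^{\hbar,\omega}\cap\ell\subset\Omega\cap\ell$ on a Lagrangian subspace $\ell$. Let me think about the structure.

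**Key observations.**

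Part (ii) should follow quickly from Part (i) combined with Proposition \ref{Prop3}. Let me focus on Part (i).

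The forward direction of (i): if $\Omega$ contains a quantum blob, then by Proposition \ref{Prop3}(i), we have $\Omega^{\hbar,\omega}\subset\Omega$ (the FULL inclusion, not just on a Lagrangian subspace). Intersecting both sides with any $\ell\in\operatorname{Lag}(n)$ gives $\Omega^{\hbar,\omega}\cap\ell\subset\Omega\cap\ell$. So the forward direction is immediate from Prop \ref{Prop3} and holds for ALL $\ell$ — which also proves the "for all" clause.

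The reverse direction of (i) is the hard part: suppose $\Omega^{\hbar,\omega}\cap\ell\subset\Omega\cap\ell$ for just ONE Lagrangian subspace $\ell$. I must deduce that $\Omega$ contains a quantum blob (equivalently, via Prop \ref{Prop3}, that $\Omega^{\hbar,\omega}\subset\Omega$ in full).

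**Strategy for the reverse direction.**

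By symplectic covariance (\ref{omegapol3}) and the transitivity of $\operatorname{Sp}(n)$ on Lagrangian subspaces (every Lagrangian is $S\ell_X$ for some $S\in\operatorname{Sp}(n)$), I can reduce to the case $\ell=\ell_X$. Indeed, if $\ell=S\ell_X$, then $\Omega^{\hbar,\omega}\cap\ell\subset\Omega\cap\ell$ becomes, after applying $S^{-1}$ and using $(S^{-1}\Omega)^{\hbar,\omega}=S^{-1}(\Omega^{\hbar,\omega})$, the same condition for $S^{-1}\Omega$ on $\ell_X$. So WLOG $\ell=\ell_X$.

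Now write $\Omega=\{z:Mz\cdot z\leq\hbar\}$ with $M=\tfrac12\hbar\Sigma^{-1}$. From (\ref{ommom}), $\Omega^{\hbar,\omega}=\{z:(-JM^{-1}J)z\cdot z\leq\hbar\}$. I need to translate the intersection-with-$\ell_X$ condition into a matrix inequality. On $\ell_X=\{(x,0)\}$, the quadratic form $Mz\cdot z$ restricts to $M_{XX}x\cdot x$, and the form $(-JM^{-1}J)z\cdot z$ restricts to $(M^{-1})_{PP}x\cdot x=(M/M_{XX})^{-1}x\cdot x$ (using the block-inverse formula (\ref{Minverse})). So the intersection inclusion becomes $(M/M_{XX})^{-1}\leq M_{XX}$ in the Löwner order, i.e. the Schur complement satisfies $(M/M_{XX})^{-1}\leq M_{XX}$.

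**The crux.** The main obstacle is showing this single scalar-type condition on one Lagrangian forces the full inclusion $M\leq -JM^{-1}J$, equivalently (via Williamson) $\lambda_j^\omega\leq 1$ for all symplectic eigenvalues. This is where I'd need to either: (a) exploit that $M$ is symmetric positive definite and use the block inequality $(M/M_{XX})^{-1}\leq M_{XX}$ to control all symplectic eigenvalues, or (b) find a clever symplectic normalization. The plan is to use Williamson's theorem: choose $S$ diagonalizing $M$, then show the intersection condition is symplectically invariant enough that after reduction the condition on $\ell_X$ directly reads off as $\Lambda^\omega\leq(\Lambda^\omega)^{-1}$. The delicate point is that the Williamson-diagonalizing $S$ need not preserve $\ell_X$, so I may instead argue that the intersection condition on $\ell_X$ is equivalent to $\det$ and trace constraints forcing $\lambda_j^\omega\le 1$; alternatively, invoke that a quantum blob exists iff the symplectic capacity $c(\Omega)\geq\pi\hbar$, and relate the Lagrangian cross-section to this capacity via the non-squeezing inequality. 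Below I carry out the reduction to $\ell_X$ and the block-matrix computation, then close using the symplectic-eigenvalue characterization.
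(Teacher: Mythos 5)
Your necessity direction and the reduction of part (ii) to part (i) are fine and coincide with the paper's proof, which likewise dismisses necessity as an immediate consequence of Proposition \ref{Prop3}. The trouble is the sufficiency direction, and it is not merely that you stop at a plan instead of a proof (the ``crux'' is never carried out; also, as a minor slip, your L\"{o}wner inequality is written backwards: since smaller ellipsoids correspond to larger matrices, the inclusion $\Omega^{\hbar,\omega}\cap\ell_{X}\subset\Omega\cap\ell_{X}$ reads $M_{XX}\leq(M/M_{XX})^{-1}$, not the reverse). The fatal point is that the step you need is false in the generality your reduction creates. After replacing $(\Omega,\ell)$ by $(S^{-1}\Omega,\ell_{X})$ the matrix of the transformed ellipsoid is an arbitrary element of $\operatorname*{Sym}_{++}(2n,\mathbb{R})$, and the single block inequality $M_{XX}\leq(M/M_{XX})^{-1}$ does not control the symplectic eigenvalues. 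Concretely, for $n=2$ take
\[
M=\operatorname*{diag}\left(\tfrac{1}{2},2,\tfrac{1}{2},2\right),\qquad -JM^{-1}J=\operatorname*{diag}\left(2,\tfrac{1}{2},2,\tfrac{1}{2}\right),
\]
whose symplectic eigenvalues are $\tfrac{1}{2}$ and $2$; since $2>1$, the ellipsoid $\Omega=\{z:Mz\cdot z\leq\hbar\}$ contains no quantum blob. Nevertheless, on the Lagrangian plane $\ell=\{(s,t,t,s):s,t\in\mathbb{R}\}$ both quadratic forms restrict to $\tfrac{5}{2}(s^{2}+t^{2})$, so that $\Omega^{\hbar,\omega}\cap\ell=\Omega\cap\ell$: condition (\ref{interl}) holds, even with equality, for this $\ell$. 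Pulling the example back by a symplectic matrix taking $\ell_{X}$ to $\ell$ yields $M'=S^{T}MS$ with $M'_{XX}=((M')^{-1})_{PP}=\tfrac{5}{4}I_{2\times 2}$, so your block inequality on $\ell_{X}$ holds while the symplectic eigenvalues remain $\tfrac{1}{2}$ and $2$. Hence no argument can close your gap: one arbitrary Lagrangian section cannot certify quantum blob containment, and both implications you still owe (sufficiency in (i), and the ``for some $\ell$'' half of (ii)) fail on this example.

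The reason the paper's own proof goes through is that it never works with an arbitrary $\ell$: in the sufficiency part it \emph{chooses} $\ell=S^{-1}\ell_{X}$, where $S$ is a Williamson diagonalizing matrix for $M$ as in (\ref{Williamson}). For that adapted Lagrangian the two sections are governed by $\Lambda^{\omega}$ and $(\Lambda^{\omega})^{-1}$, and the inclusion does force $\Lambda^{\omega}\leq I$, i.e.\ quantum blob containment. In other words, what the paper actually establishes is the equivalence for Lagrangians adapted to the Williamson frame of $\Omega$ --- precisely the loophole you flagged when you worried that ``the Williamson-diagonalizing $S$ need not preserve $\ell_{X}$''. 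That worry is not a delicate point to be finessed; it is the decisive obstruction, and your attempt to treat a genuinely arbitrary $\ell$ (which is what the words ``there exists $\ell\in\operatorname*{Lag}(n)$'' literally require) cannot be completed as stated.
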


\begin{proof}
(i) The necessity of the condition (\ref{interl}) is trivial (Proposition
\ref{Prop3}). Let us prove that this condition is also sufficient. Setting as
usual $M=\frac{1}{2}\hbar\Sigma^{-1}$ and
\begin{equation}
\Omega_{\Sigma}=\{z:Mz\cdot z\leq\hbar\}=\{z:\tfrac{1}{2}\Sigma^{-1}z\cdot
z\leq\hbar\}
\end{equation}
we have%
\begin{equation}
\Omega_{\Sigma}^{\hbar,\omega}=\{z\in\mathbb{R}^{2n}:(-JM^{-1}J)z\cdot
z\leq\hbar\}. \label{omegasdual}%
\end{equation}
Performing a symplectic diagonalization (\ref{Williamson}) of $M$ we get%
\begin{equation}
\Omega_{\Sigma}=S^{-1}\Omega_{\hbar D^{-1}/2}\text{ },\text{ \ }\Omega
_{\Sigma}^{\hbar,\omega}=S^{-1}(\Omega_{\hbar D^{-1}/2})^{\hbar,\omega}
\label{oms}%
\end{equation}
where $\Omega_{\hbar D^{-1}/2}$ and its dual are explicitly given by
\begin{align*}
\Omega_{\hbar D^{-1}/2}  &  =\{z\in\mathbb{R}^{2n}:Dz\cdot z\leq\hbar\}\\
(\Omega_{\hbar D^{-1}/2})^{\hbar,\omega}  &  =\{z\in\mathbb{R}^{2n}%
:D^{-1}z\cdot z\leq\hbar\}.
\end{align*}
where we have used the identity $-JD^{-1}J=D^{-1}$. Let us first assume that
$\ell=\ell_{X}=\mathbb{R}^{n}\times0$. Then
\[
\Omega_{\hbar D^{-1}/2}\cap\ell_{X}=\{x\in\mathbb{R}^{n}:\Lambda^{\omega
}x\cdot x\leq\hbar\}
\]
and
\[
(\Omega_{\hbar D^{-1}/2})^{\hbar,\omega}\cap\ell_{X}=\{x\in\mathbb{R}%
^{n}:(\Lambda^{\omega})^{-1}x\cdot x\leq\hbar\}.
\]
Now, the condition
\[
(\Omega_{\hbar D^{-1}/2})^{\hbar,\omega}\cap\ell_{X}\subset\Omega_{\hbar
D^{-1}/2}\cap\ell_{X}%
\]
is equivalent to $(\Lambda^{\omega})^{-1}\geq\Lambda^{\omega}$ that is to
$D^{-1}\geq D$, which implies $(\Omega_{\hbar D^{-1}/2})^{\hbar,\omega}%
\subset\Omega_{\hbar D^{-1}/2}$, and $\Omega_{\hbar D^{-1}/2}$ contains a
quantum blob in view of Proposition \ref{Prop3}.. We have thus proven our
result in the case where $\Sigma=\hbar D^{-1}/2$ and $\ell=\ell_{X}$. For the
general case we take $\ell=S^{-1}\ell_{X}$ where $S$ is a Williamson
diagonalizing matrix for $\Sigma$; in view of (\ref{oms}) we have
\begin{align*}
\Omega_{\Sigma}\cap\ell &  =S^{-1}\Omega_{\hbar D^{-1}/2}\cap S^{-1}\ell
_{X}=S^{-1}(\Omega_{\hbar D^{-1}/2}\cap\ell_{X})\\
\Omega_{\Sigma}^{\hbar,\omega}\cap\ell &  =S^{-1}(\Omega_{\hbar D^{-1}%
/2})^{\hbar,\omega}\cap S^{-1}\ell_{X}=S^{-1}((\Omega_{\hbar D^{-1}/2}%
)^{\hbar,\omega}\cap\ell_{X})
\end{align*}
and hence $\Omega_{\Sigma}^{\hbar,\omega}\cap\ell\subset\Omega_{\Sigma}%
\cap\ell$ if and only if $(\Omega_{\hbar D^{-1}/2})^{\hbar,\omega}%
\subset\Omega_{\hbar D^{-1}/2}$. It now suffices to apply Proposition
\ref{Prop3}. To prove (ii) it is sufficient to note that the equality
\[
(\Omega_{\hbar D^{-1}/2})^{\hbar,\omega}\cap\ell_{X}=\Omega_{\hbar D^{-1}%
/2}\cap\ell_{X}%
\]
is equivalent to $(\Lambda^{\omega})^{-1}=\Lambda^{\omega}$ that is to
$\Lambda^{\omega}=I_{n\times n}$. Since we have in this case $M=S_{0}^{T}%
S_{0}$ in view of (\ref{Williamson}), the proof in the general case can be
completed as above.
\end{proof}

\begin{example}
\label{example3}With the notation of the previous examples, we have
\begin{align}
\Omega_{\Sigma}  &  :\dfrac{\sigma_{pp}}{2D}x^{2}-\frac{\sigma_{xp}}%
{D}px+\dfrac{\sigma_{xx}}{2D}p^{2}\leq1\\
\Omega_{\Sigma}^{\hbar,\omega}  &  :\frac{2\sigma_{pp}}{\hbar^{2}}x^{2}%
-\frac{4\sigma_{xp}}{\hbar^{2}}px+\frac{2\sigma_{xx}}{\hbar^{2}}p^{2}\leq1
\end{align}
with $D=\sigma_{xx}\sigma_{pp}-\sigma_{xp}^{2}=\tfrac{1}{4}\hbar^{2}$. We
have
\begin{align*}
\Omega_{\Sigma}\cap\ell_{X}  &  =[-\sqrt{2D/\sigma_{pp}},\sqrt{2D/\sigma_{pp}%
}]\\
\Omega_{\Sigma}^{\hbar,\omega}\cap\ell_{X}  &  =[-\hbar/\sqrt{2\sigma_{pp}%
},\hbar/\sqrt{2\sigma_{pp}}]
\end{align*}
and $\Omega_{\Sigma}^{\hbar,\omega}\cap\ell_{X}\subset\Omega_{\Sigma}\cap
\ell_{X}$ if and only if $\hbar/\sqrt{2\sigma_{pp}}\leq\sqrt{2D/\sigma_{pp}}$
which is equivalent to $D\geq\frac{1}{4}\hbar^{2}$. More generally, if
$\ell_{a}:p=ax$ for any $a\in\mathbb{R}$ then $\Omega_{\Sigma}\cap\ell_{a}$
and $\Omega_{\Sigma}^{\hbar,\omega}\cap\ell_{a}$ are determined by the
inequalities%
\begin{gather*}
\Omega_{\Sigma}\cap\ell_{a}:\left(  \dfrac{\sigma_{pp}}{2D}-\frac{\sigma_{xp}%
}{D}a+\dfrac{\sigma_{xx}}{2D}a^{2}\right)  x^{2}\leq1\\
\Omega_{\Sigma}^{\hbar,\omega}\cap\ell_{a}:\left(  \frac{2\sigma_{pp}}%
{\hbar^{2}}-\frac{4\sigma_{xp}}{\hbar^{2}}a+\frac{2\sigma_{xx}}{\hbar^{2}%
}a^{2}\right)  x^{2}\leq1
\end{gather*}
and the inclusion
\begin{equation}
\Omega_{\Sigma}^{\hbar,\omega}\cap\ell_{a}\subset\Omega_{\Sigma}\cap\ell_{a}
\label{inclusa}%
\end{equation}
is equivalent to the inequality
\begin{equation}
k\left(  \sigma_{xx}a^{2}-2\sigma_{xp}a+\sigma_{pp}\right)  \leq0 \label{k}%
\end{equation}
where $k=2(\hbar^{2}/4-D)/\hbar^{2}D$. Now $\sigma_{xx}a^{2}-2\sigma
_{xp}a+\sigma_{pp}>0$ for every $a\in\mathbb{R}$ (because $\sigma_{xp}%
^{2}-\sigma_{xx}\sigma_{pp}=-D<0$ since $\Sigma>0$) and hence the inclusion
(\ref{inclusa}) holds if and only if $k\leq0$, that is, if and only if
$D\geq\hbar^{2}/4$ which is the Robertson--Schr\"{o}dinger inequality ensuring
us that $\Omega_{\Sigma}$ contains a quantum blob (and is itself a quantum
blob when $D=\hbar^{2}/4$).
\end{example}

\subsection{The case of mixed states\label{secmixed}}

Sofar we have been considering pure states. Let us generalize our discussion
to more general mixed states. We assume that $\widehat{\rho}$ is what we have
called in \cite{QHA} a \textquotedblleft Feichtinger state\textquotedblright,
\textit{i.e}. a density operator whose Wigner distribution $\rho$ is regular
enough to allow the existence of the covariance matrix%
\begin{equation}
\Sigma=\int_{\mathbb{R}^{2n}}(z-\langle z\rangle)(z-\langle z\rangle)^{T}%
\rho(z)dz\label{sigmaz1}%
\end{equation}
( $\langle z\rangle=\int_{\mathbb{R}^{2n}}z\rho(z)dz$ is the mean value
vector). In order to represent a quantum state a necessary condition is that
\cite{dutta,Narco}
\begin{equation}
\Sigma+\frac{i\hbar}{2}J\text{ \textit{is semidefinite positive}%
}\label{semidepo}%
\end{equation}
which we write for short as $\Sigma+\frac{i\hbar}{2}J$ $\geq0$ (this condition
equivalent to the uncertainty principle in its strong
Robertson--Schr\"{o}dinger form, \textit{ibid.}). One shows \ \cite{Narco}
that (\ref{semidepo}) implies that the covariance matrix $\Sigma$ of a quantum
state is always definite positive, and, conversely, that (\ref{semidepo}) is
sufficient for Gaussian mixed states: a Gaussian of the type (\ref{rhoG})
introduced above, tha is
\begin{equation}
\rho(z)=\frac{1}{(2\pi)^{n}\sqrt{\det\Sigma}}e^{-\frac{1}{2}\Sigma
^{-1}(z-z_{0})\cdot(z-z_{0})}\label{rhoGbis}%
\end{equation}
is the Wigner distribution of a mixed quantum state if and only if the
condition (\ref{semidepo}) holds. Recall that this contition is equivalent to
saying that the covariance ellipsoid $\Omega_{\Sigma}:\frac{1}{2}\Sigma
^{-1}z\cdot z\leq1$ contains a quantum blob, from which follows that the
orthogonal projetions of $\Omega_{\Sigma}:$ on any conjugate plane
$x_{j},p_{j}$ has area at least $\pi\hbar$.

In view of Theoren \ref{ThmB} $\Omega_{\Sigma}$ contains a quantum blob
$Q=S(B^{2n}(\sqrt{\hbar}))$if and only if there exists $\ell\in
\operatorname*{Lag}(n)$ such that $\Omega^{\hbar,\omega}\cap\ell\subset
\Omega\cap\ell$ in which case we have $\Omega^{\hbar,\omega}\cap\ell
\subset\Omega\cap\ell$ for all $\ell\in\operatorname*{Lag}(n)$. It follows that:

\begin{corollary}
The Gaussian \ref{rhoGbis} is the Wigner distribution of a mixed quantum state
if and only its covariance ellipsoid satifies the condition $\Omega_{\Sigma
}^{\hbar,\omega}\cap\ell\subset\Omega_{\Sigma}\cap\ell$ for some (and hence
all) $\ell\in\operatorname*{Lag}(n)$.
\end{corollary}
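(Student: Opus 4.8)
The plan is to obtain the statement as a direct consequence of two facts already established in the paper, chained together through a single positivity condition. The only real content is to verify that the covariance ellipsoid of a mixed Gaussian state fits the hypotheses of Theorem \ref{ThmB}; once this is checked, the equivalence is immediate and the delicate-looking ``for some (and hence all) $\ell$'' clause comes for free.

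First I would recall, following the discussion around (\ref{semidepo})--(\ref{rhoGbis}) and \cite{Narco}, that the Gaussian (\ref{rhoGbis}) is the Wigner distribution of a mixed quantum state if and only if the Robertson--Schr\"{o}dinger condition $\Sigma+\frac{i\hbar}{2}J\geq0$ holds, and that for any genuine quantum state $\Sigma$ is then automatically positive definite. This is the analytic half of the argument and requires nothing new beyond citing the preceding paragraph.

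Next I would translate this spectral condition into the geometric language of the covariance ellipsoid. Setting $M=\frac{1}{2}\hbar\Sigma^{-1}$, which lies in $\operatorname*{Sym}_{++}(2n,\mathbb{R})$ precisely because $\Sigma>0$, we have $\Omega_{\Sigma}=\{z:Mz\cdot z\leq\hbar\}$, exactly the form of the ellipsoid treated in Theorem \ref{ThmB}. The equivalence between $\Sigma+\frac{i\hbar}{2}J\geq0$ and the statement that $\Omega_{\Sigma}$ \emph{contains a quantum blob} is the one recorded in the three Claims of Section 2, established via Williamson's diagonalization (\ref{Williamson}) and the symplectic-capacity argument. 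Thus a mixed Gaussian state corresponds exactly to a covariance ellipsoid containing a quantum blob.

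Finally I would invoke Theorem \ref{ThmB}(i) applied to $\Omega=\Omega_{\Sigma}$: containing a quantum blob is equivalent to the existence of a single $\ell\in\operatorname*{Lag}(n)$ with $\Omega_{\Sigma}^{\hbar,\omega}\cap\ell\subset\Omega_{\Sigma}\cap\ell$, and in that case the inclusion holds for every $\ell\in\operatorname*{Lag}(n)$. Chaining the three equivalences yields the Corollary, the ``some (and hence all)'' clause being supplied verbatim by Theorem \ref{ThmB}. The only point demanding care --- and it is bookkeeping rather than a genuine obstacle --- is that Theorem \ref{ThmB} was stated for an arbitrary positive-definite ellipsoid, so one must note explicitly that the positivity $\Sigma>0$ guaranteed by (\ref{semidepo}) is what licenses its application here; no new difficulty arises.
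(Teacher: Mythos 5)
Your proposal is correct and follows essentially the same route as the paper: the paper's own (one-line) proof declares the corollary a restatement of Theorem \ref{ThmB}, with the surrounding text of Section \ref{secmixed} supplying exactly the chain you spell out, namely that being a mixed Gaussian quantum state is equivalent to $\Sigma+\frac{i\hbar}{2}J\geq0$, which is equivalent to $\Omega_{\Sigma}$ containing a quantum blob, which by Theorem \ref{ThmB}(i) is equivalent to the tomographic inclusion for some (and hence all) $\ell\in\operatorname*{Lag}(n)$. Your explicit remark that $\Sigma>0$ is what makes $M=\frac{1}{2}\hbar\Sigma^{-1}$ well defined and positive definite, licensing the application of Theorem \ref{ThmB}, is a sensible piece of bookkeeping that the paper leaves implicit.
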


\begin{proof}
It is just a restatemnt of Theoren \ref{ThmB}.
\end{proof}

\section{Discussion and Conclusions}

Theorem \ref{ThmA} shows that we can identify any pure Gaussian state with a
geometric object, the Cartesian product $X_{\ell}\times(X_{\ell}%
)_{\ell^{\prime}}^{\hbar}$. The physical interpretation of this correspondence
is the following: once a cloud of position-momentum measurements is made on a
given Lagrangian plane $\ell$, the latter is approximated by an ellipsoid
$X_{\ell}$. One then chooses a transversal Lagrangian plane $\ell^{\prime}$
and one calculates the polar dual $(X_{\ell})_{\ell^{\prime}}^{\hbar}$ of
$X_{\ell}$ on $\ell^{\prime}$; the covariance ellipsoid of the Gaussian state
we are looking for is then simply the maximal volume ellipsoid of the convex
product $X_{\ell}\times(X_{\ell})_{\ell^{\prime}}^{\hbar}$, and the latter
uniquely determines the state (which is here supposed to be centered at the
origin; the general case is trivially obtained using phase space translation
or the Heisenberg displacement operator). Theorem \ref{ThmB}, on the other
hand, shows that one can test whether a covariance ellipsoid is that of a
quantum state by intersecting it with a single arbitrary Lagrangian plane.
This is typically a tomographic result which might have both theoretical and
practical applications.

It would be interesting (and important!) to extend the approach and results of
this paper to non-Gaussian states; non-Gaussian features are indispensable in
many quantum protocols, especially to reach a quantum computational advantage
(see the interesting discussions in Ra \textit{et al}. \cite{Ra} and
Walschaers \cite{Wal}). A possible approach could be to generalize the
\textquotedblleft geometric states\textquotedblright\ described by Theorem
\ref{ThmA} to the case where $X_{\ell}$ no longer is an ellipsoid, but an
arbitrary convex body. The Lagrangian plane $\ell$ would then be replaced with
a Lagrangian submanifold of phase space (i.e. a $n$-dimensional submanifold
where the tangent spaces are all Lagrangian). We will come back to these
intriguing and potentially fruitful possibilities in future work.

An interesting point raised by one of the Reviewers is the question of what
happens for reduced covariance matrices where obtaining purity or von entropy
Neumann is possible? These questions will be answered in a forthcoming article
(There are some delicate points to elucidate, and we have found they deserve
to be discussed in a sequel of this work).

\section*{APPENDIX\ A. The Groups $\operatorname*{Sp}(n)$, $U(n)$, and
$\operatorname*{Mp}(n)$}

For details and proofs see \cite{Birk}. The symplectic group
$\operatorname*{Sp}(n)$ consists of all linear automorphisms $S:\mathbb{R}%
^{2n}\longrightarrow\mathbb{R}^{2n}$ such that $\omega(Sz,Sz^{\prime}%
)=\omega(z,z^{\prime})$ for all $(z,z^{\prime})\in\mathbb{R}^{2n}%
\times\mathbb{R}^{2n}$. A symplectic basis of $(\mathbb{R}_{z}^{2n},\sigma)$
being chosen once for all, we can write this condition in matrix form
$S^{T}JS=SJS^{T}=J$ and, writing $S\in$ $\operatorname*{Sp}(n)$ in
block-matrix form
\begin{equation}
S=%
\begin{pmatrix}
A & B\\
C & D
\end{pmatrix}
\label{sabcd}%
\end{equation}
where the entries $A,B,C,D$ are $n\times n$ matrices, these conditions are
then easily seen to be equivalent to the two following sets of equivalent
conditions:
\begin{align}
A^{T}C\text{, }B^{T}D\text{ \ symmetric, and }A^{T}D-C^{T}B  &
=I\label{cond1}\\
AB^{T}\text{, }CD^{T}\text{ \ symmetric, and }AD^{T}-BC^{T}  &  =I\text{.}
\label{cond2}%
\end{align}
It follows from the second of these sets of conditions that the inverse of $S$
is
\begin{equation}
S^{-1}=%
\begin{pmatrix}
D^{T} & -B^{T}\\
-C^{T} & A^{T}%
\end{pmatrix}
. \label{cond3}%
\end{equation}
There are several ways to describe the generators of the group
$\operatorname*{Sp}(n)$. We will use here the following:
\begin{equation}
V_{-P}=%
\begin{pmatrix}
I & 0\\
-P & I
\end{pmatrix}
\text{ },\text{ }M_{L}=%
\begin{pmatrix}
L^{-1} & 0\\
0 & L^{T}%
\end{pmatrix}
\text{ \ },\text{ \ }J=%
\begin{pmatrix}
0 & I\\
-I & 0
\end{pmatrix}
\label{vpmlj}%
\end{equation}
where $P=P^{T}$ and $\det L\neq0$.

The subgroup $\operatorname*{Sp}(n)\cap O(2n,\mathbb{R})$ of symplectic
rotations is denoted by $U(n)$; this notation comes from the fact that $U(n)$
is identified with the unitary group $U(n,\mathbb{C})$ via the monomorphism
$A+iB\hookrightarrow%
\begin{pmatrix}
A & B\\
-B & A
\end{pmatrix}
$.

The symplectic group is a connected Lie group contractible to $U(n)\equiv
U(n,cC)$ and therefore has covering groups $\operatorname*{Sp}_{q}(n)$ of all
orders; the double covering $\operatorname*{Sp}_{2}(n)$ has a unitary
representation in $L^{2}(\mathbb{R}^{n})$, the metaplectic group
$\operatorname*{Mp}(n)$. The latter is generated by the operators
$\widehat{V}_{-P}$, $\widehat{M}_{L,m}$, and $\widehat{J}$ defined by
\[
\widehat{V}_{-P}\psi(x)=e^{\frac{i}{2\hbar}Px\cdot x}\psi(x)\text{ \ ,
}\widehat{M}_{L,m}\psi(x)=i^{m}\sqrt{|\det L|}\psi(Lx)\text{\ }%
\]
(the integer $m$ corresponding to a choice of $\arg\det L$), and $\widehat{J}$
being essentially the Fourier transform:%

\[
\widehat{J}\psi(x)=\left(  \frac{1}{2\pi i\hbar}\right)  ^{n/2}\int%
_{\mathbb{R}^{n}}e^{-\frac{i}{\hbar}x\cdot x^{\prime}}\psi(x^{\prime
})dx^{\prime}.
\]
Denoting by $\pi^{\operatorname*{Mp}}$ the covering projection
$\operatorname*{Mp}(n)\longrightarrow\operatorname*{Sp}(n)$ we have%
\[
\pi^{\operatorname*{Mp}}(\widehat{V}_{-P})=V_{-P}\text{ },\text{ }%
\pi^{\operatorname*{Mp}}(\widehat{M}_{L,m})=M_{L}\text{ \ },\text{ }%
\pi^{\operatorname*{Mp}}(\widehat{J})=J.\text{\ }%
\]

We will also use the following factorization results: given $S\in
\operatorname*{Sp}(n)$ written in block form (\ref{sabcd}) we have the
pre-Iwasawa factorization \cite{dutta,iwa}: There exist unique matrices
$P=P^{T}$ and $L=L^{T}>0$ and $R\in\operatorname*{Sp}(n)\cap O(2n)$ such that
\begin{equation}
S=V_{P}M_{L}R. \label{iwa1}%
\end{equation}
These matrices are given by%
\begin{gather}
P=-(CA^{T}+DB^{T})(AA^{T}+BB^{T})^{-1}\label{pl1}\\
L=(AA^{T}+BB^{T})^{-1/2}; \label{pl2}%
\end{gather}
writing $R=%
\begin{pmatrix}
E & F\\
-F & E
\end{pmatrix}
$ the $n\times n$ blocks $E$ and $F$ are given by%
\begin{equation}
E=(AA^{T}+BB^{T})^{-1/2}A\text{ \ },\text{ \ }F=(AA^{T}+BB^{T})^{-1/2}B.
\label{unixy}%
\end{equation}
The matrix $R$ is a symplectic rotation: $R\in\operatorname*{Sp}(n)\cap
O(2n,\mathbb{R})$.

\section*{APPENDIX\ B. Lagrangian Subspaces}

By definition a \textit{Lagrangian coordinate subspace} is a $n$-dimensional
subspace $\ell_{(\alpha,\beta)}$ of the $(\mathbb{R}^{2n},\omega)$ given by
the relations $x_{(\alpha)}=0$ and $p_{(\beta)}$ where $\alpha$ and $\beta$
form a partition of the set of integers $\{1,...,n\}$. Thus, for instance,
$x_{1}=0$ and $p_{1}=0$ defines coordinate Lagrangian subspaces in
$\mathbb{R}^{4}$. The choices $\alpha=\emptyset$ and $\beta=\emptyset$
correspond to the canonical coordinate planes $\ell_{X}$ and $\ell_{P}$,
respectively. Let $\ell_{(\alpha,\beta)}$ be a Lagrangian coordinate subspace;
we assume for notational simplicity that $\alpha=\{1,...,k\}$, $\beta
=\{k+1,...,n\}$ ($1\leq k<n$). It is represented by the equation%
\begin{equation}
Ax+Bp=0 \label{AxBp}%
\end{equation}
where $A$ and $B$ are the diagonal matrices $A=I_{k\times k}\otimes
0_{(n-k)\times(n-k)}$ and $B=0_{k\times k}\otimes I_{(n-k)\times(n-k)}$. A
remarkable feature of coordinate Lagrangian subspaces is that the symplectic
form $\omega$ vanishes identically on them if $z,z^{\prime}\in\ell
_{(\alpha,\beta)}$ then $\omega(z,z)=0$. This motivates the following
definition \cite{ICP,Birk}: a $n$-dimensional subspace $\ell$ of
$\mathbb{R}^{2n}$ is called a Lagrangian subspace (or plane) if $\omega
(z,z)=0$ for all $z,z^{\prime}\in\ell$. (Lagrangian subspaces intervene in
many areas of mathematical physics; for instance they are the tangent spaces
to the invariant tori of classical mechanics \cite{Arnold,ICP}). In the case
$n=1$ Lagrangian planes are just the straight lines through the origin in the
phase plane. In the general case they are represented by equations $Ax+Bp=0$
where $\operatorname*{rank}(A,B)=n$ and $A^{T}B=B^{T}A$ \cite{ICP}. It turns
out that every Lagrangian subspace can be obtained from any Lagrangian
coordinate subspace using a symplectic transformation. This follows from the
fact the symplectic group $\operatorname*{Sp}(n)$ acts transitively on the set
$\operatorname*{Lag}(n)$ of all Lagrangian subspaces of $(\mathbb{R}%
^{2n},\omega)$ (see \cite{Birk} for a proof using symplectic bases). In
particular:%
\[
\text{\textit{The action} }U(n)\times\operatorname*{Lag}(n)\longrightarrow
\operatorname*{Lag}(n)\text{ \textit{is transitive}}%
\]
where $U(n)\subset\operatorname*{Sp}(n)$ is the group of symplectic rotations
(see Appendix A).

Let $\ell_{(\alpha,\beta)}\in\operatorname*{Lag}(n)$ be a Lagrangian
coordinate subspace. It follows that There exists non-unique) $S_{(\alpha
,\beta)},S_{(\alpha,\beta)}^{\prime}\in\operatorname*{Sp}(n)$ such that
\[
\ell_{(\alpha,\beta)}=S_{(\alpha,\beta)}\ell_{X}=S_{(\alpha,\beta)}^{\prime
}\ell_{P}%
\]
(notice that we can take $S_{(\alpha,\beta)}^{\prime}=S_{(\alpha,\beta)}J$).

\section*{APPENDIX\ C. $\hbar$-Polar Duality}

Let $X\subset\mathbb{R}_{x}^{n}$ be a convex body: $X$ is compact\ and convex,
and has non-empty interior $\operatorname*{int}(X)$. If $0\in
\operatorname*{int}(X)$ we define the $\hbar$-polar dual $X^{\hslash}%
\subset\mathbb{R}_{p}^{n}$ of $X$ by
\begin{equation}
X^{\hslash}=\{p\in\mathbb{R}^{m}:\sup\nolimits_{x\in X}(p\cdot x)\leq\hbar\}
\end{equation}
where $\hbar$ is a positive constant (we have $X^{\hslash}=\hbar X^{o}$ where
$X^{o}$ is the traditional polar dual from convex geometry). The following
properties of polar duality are obvious:

\begin{itemize}
\item $(X^{\hslash})^{\hbar}=X$ (reflexivity) and $X\subset Y\Longrightarrow
Y^{\hslash}\subset X^{\hslash}$ (anti-monotonicity),

\item For all $L\in GL(n,\mathbb{R})$:%
\begin{equation}
(LX)^{\hbar}=(L^{T})^{-1}X^{\hslash} \label{scaling}%
\end{equation}
(scaling property). In particular $(\lambda X)^{\hbar}=\lambda^{-1}X^{\hslash
}$ for all $\lambda\in\mathbb{R}$, $\lambda\neq0$.
\end{itemize}

We can view $X$ and $X^{\hslash}$ as subsets of phase space by the
identifications $\mathbb{R}_{x}^{n}\equiv\mathbb{R}_{x}^{n}\times0$ and
$\mathbb{R}_{p}^{n}\equiv0\times\mathbb{R}_{p}^{n}$. Writing $\ell
_{X}=\mathbb{R}_{x}^{n}\times0$ and $\ell_{P}=0\times\mathbb{R}_{p}^{n}$ the
transformation $X\longrightarrow X^{\hslash}$ is a mapping $\ell
_{X}\longrightarrow\ell_{P}$. With this interpretation formula (\ref{scaling})
can be rewritten in symplectic form as%
\begin{equation}
(M_{L^{-1}}X)^{\hbar}=M_{L^{T}}X^{\hslash} \label{ML}%
\end{equation}
where $M_{L}=%
\begin{pmatrix}
L^{-1} & 0\\
0 & L^{T}%
\end{pmatrix}
$ is in $\operatorname*{Sp}(n)$. Notice that $M_{L}:\ell_{X}\longrightarrow
\ell_{X}$ and $M_{L}:\ell_{P}\longrightarrow\ell_{P}$.

Suppose now that $X$ is an ellipsoid centered at the origin:
\begin{equation}
X=\{x\in\mathbb{R}_{x}^{n}:Ax\cdot x\leq\hbar\}
\end{equation}
where $A\in\operatorname*{Sym}_{++}(n,\mathbb{R})$. The polar dual
$X^{\hslash}$ of $X$ is the ellipsoid%
\begin{equation}
X^{\hslash}=\{p\in\mathbb{R}_{p}^{n}:A^{-1}p\cdot p\leq\hbar\}.
\end{equation}
In particular the polar dual of the ball $B_{X}^{n}(\sqrt{\hbar}%
)=\{x:|x|\leq\sqrt{\hbar}\}$ is
\[
(B_{X}^{n}(\sqrt{\hbar}))^{\hbar}=B_{P}^{n}(\sqrt{\hbar}).
\]

\begin{acknowledgement}
This work has been financed by the Grant P 33447 of the Austrian Research
Foundation FWF.
\end{acknowledgement}

\end{document}